\newtheorem{lemma}{Lemma}[section]
\newtheorem{theorem}[lemma]{Theorem}
\newtheorem{observation}[lemma]{Observation}
\newtheorem{definition}[lemma]{Definition}
\begin{document}

\title{Approximation Algorithms for Minimum Sum of Moving-Distance and Opening-Costs Target Coverage Problem}
\author{\footnotesize Lei Zhao \quad Zhao Zhang \thanks{Corresponding author: Zhao Zhang hxhzz@sina.com}\\
	{\it\small School of Mathematical Sciences, Zhejiang Normal University}\\
	{\it\small Jinhua, Zhejiang, 321004, China}}
\date{}
\maketitle
\setlength{\absleftindent}{0pt}
\setlength{\absrightindent}{0pt}

\noindent{\bf Abstract.}
{\rm In this paper, we study the Minimum Sum of Moving-Distance and Opening-Costs Target Coverage problem (MinMD$+$OCTC). Given a set of targets and a set of base stations on the plane, an opening cost function for every base station, the opened base stations can emit mobile sensors with a radius of $r$ from base station to cover the targets. The goal of MinMD$+$OCTC is to cover all the targets and minimize the sum of the opening cost and the moving distance of mobile sensors. We give the optimal solution in polynomial time for the MinMD$+$OCTC problem with targets on a straight line, and present a 8.928 approximation algorithm for a special case of the MinMD$+$OCTC problem with the targets on the plane.}

\vskip 0.2cm \noindent{\bf Keywords}:~mobile sensor; target coverage; dynamic programming; approximation algorithm.

\section{Introduction}
Coverage, as a fundamental issue in Wireless Sensor Networks (WSNs). This is crucial as it directly affects the quality of service the network can provide in terms of sensing accuracy, detection reliability, and overall operational effectiveness.

In the early stages of research, the use of static sensors for Target Coverage (TCOV) problem is widely studied \cite{basappa2015unit,hochbaum1985approximation,lam2023target,li2015ptas,ling_ding_constant_2012,mustafa2010improved,zhang2016approximating}. However, with the emergence of more dynamic technologies such as Unmanned Aerial Vehicles (UAVs) and other autonomous agents, mobile sensors have gained significant attention \cite{caillouet2018optimization,ko2022uav,li2011sweep,xu2021minimizing}. In mobile sensor networks, base stations play an important role, such as data integration and processing \cite{kumar2018mobile,kumar2023approximation,liang2022approximation}, or charging \cite{liang2021changeable,liang2019approximation}.

Chen {\it et al.} \cite{chen2016ptas} first integrated the concept of base stations into TCOV problem and got PTAS. In their model, the base stations and the targets are distributed on a plane, the sensor can cover the range of radius $r$, and all the sensor initial positions are at the base station. The objective is to minimize the moving distance of the sensors while covering all targets. After that, Guo {\it et al.} \cite{guo2021minsum} extended the discussion to include heterogeneous sensors with base stations, that is, the coverage radius of sensors is different.

There are a number of classic facility location problems, the most basic of which is the uncapacitated Facility Location (UFL) problem. In UFL problem, given a bipartite graph $(\mathcal{F}, \mathcal{C})$, where $\mathcal{F}$ is the set of facilities and $\mathcal{C}$ is the set of customers. For any $i$ belongs to $\mathcal{F}$, $j$ belongs to $\mathcal{C}$, $f_i$ is the opening cost of facility $i$, and $c_{ij}$ is the connection cost of facility $i$ to customer $j$. The goal is to choose to open a subset $\phi$ of $\mathcal{F}$ and connect all customers to $\phi$, ultimately minimizing the opening cost and connection cost. There is no requirement for the number of customers a facility can connect to.

In this paper, considering the location problem of the base station, different locations of base stations incur varying opening costs. Only the opened base stations are capable of
deploying sensors with radius $r$. Our objective is to identify the optimal locations for base stations from a set of candidates, emit sensors from these locations to cover the targets, and minimize the total of opening costs and the sensors' movement distance. It can be seen that when $r$ is 0, the MinMD$+$OCTC problem is an UFL problem. This is the first paper to
study the decision-making process in base station location for target coverage.

\subsection{Related works}
Liao {\it et al.} \cite{liao2014minimizing} were the first to study TCOV problem with sensor aiming at the sensor movement. In their model, it is necessary to move the mobile sensors on the plane to cover all the targets and minimize the total moving distance, the NP-hardness of TCOV is proofed. Spacially, they use the Hungarian algorithm to obtain a polynomial time algorithm for the case that the distance between any two targets is more than twice of the sensor coverage radius. Besides, they design a heuristic algorithm for general case.
There are also some articles studying the TCOV problem of minimizing the sum of sensor movement distances under limited budget constraint. For example, Nguyen {\it et al.} \cite{Nguyen2020NP} studied the TCOV problem with a limited number of sensors. Liang {\it et al.} {liang2020maximum} studied the TCOV problem with a limited movement distance $b$ for each sensor. They both obtained the $1-1/e$ approximation. Jin {\it et al.} \cite{jin2024approximation} also considered 
that the total budget has an upper bound $B$ on the basis of \cite{liang2020maximum}, obtained $\frac{1}{2v}$-approximation, where $v=\lceil \frac{mb}{B-b}\rceil$, $m$ is the number of sensor.

For TCOV problem with base station, Chen {\it et al.} \cite{chen2016ptas} studied the {\em k-Sink Minimum Movement Target Coverage} problem (k-SMMTC), where $k$ is the number of base station. In the k-SMMTC problem,
the base station plays the role of allocating coverage tasks to the sensor. They obtained PTAS within $O(n^{\frac{1}{{\epsilon}^{2}}})$ time for k-SMMTC problem.
Wongwattanakij {\it et al.} \cite{wongwattanakij2023improvedptas} got same ratio and improved its running time to $O(n^{\frac{1}{\epsilon}})$. In particular, Guo {\it et al.} \cite{guo2021minsum} considered heterogeneous sensors. They used dynamic programming to obtain the optimal solution for the targets on the straight line. 
However, none of the articles about base stations have considered the location of base stations.

For the metric UFL problem, David B. Shmoys et al. \cite{shmoys1997approximation} proposed the first constant factor approximation algorithm, achieving the approximation ratio of 3.16 using LP-rounding. Subsequently, Jaroslaw Byrka and Karen Aardal \cite{byrka2010optimal} enhanced the LP-rounding method to obtain an improved approximation ratio of 1.5. Kamal Jain et al. \cite{jain2003greedy} achieved a 1.61 approximation ratio through the dual fitting method. Li Shi \cite{li_1488_2013} further improved the approximation ratio to 1.488 by combining the methods in \cite{byrka2010optimal} and \cite{jain2003greedy}, which currently represents the best known approximation ratio.

\subsection{Contribution}
This paper studies the MinMD$+$OCTC problem. The contributions are summarized as follows:

$(\romannumeral 1)$ We use dynamic programming to give an exact algorithm for MinMD$+$OCTC$_{line}$. Especially for situations where the sensor center does not need to move to a straight line, dynamic programming can also be used to obtain the optimal solution.

$(\romannumeral 2)$ For the MinMD$+$OCTC problem, we show its NP-hardness and give a 8.928 approximation algorithm. When the sensor's coverage range is square, it can be improved to 7.44 approximation.

\section{The Algorithm}\label{se2.2}
The problem studied in this paper is formally defined as follows.

\begin{definition}[MinMD$+$OCTC]
{\rm Suppose $\mathcal{T}=\{t_{1},\ldots,t_{n}\}$ is a set of targets and $\mathcal{B}=\{b_{1},\ldots,b_{m}\}$ is a set of sensors sorted in ascending order of their $x$-coordinates on the plane. The opening cost of base station $b_i$ is $c_i$. An opened base station can emit mobile sensors to some locations for the service of coverage. The number of mobile sensors emitted from a base station is unlimited. Each mobile sensor has a covering radius $r$ and can cover all those targets within distance at most $r$ from it. The problem is to determine a subset of base stations to be opened and the locations for the mobiles sensors emitted from opened base stations such that all targets are covered and the sum of moving-distances and opening-costs is minimized. When all targets lie on a line $L$, and the sensor center moves to a straight line for coverage, this problem is denoted as MinMD$+$OCTC$_{line}$.}
\end{definition}

\subsection{Polynomial-Time Algorithm for MinMD$+$OCTC$_{line}$}\label{sec_ploy_alg_for_min}

For convenience, we first introduce a lemma to illustrate that the MinMD$+$OCTC$_{line}$ problem can be equivalent to an MinMD$+$OCTC$_{line}$ problem that only considers the base station on one side of the straight line $L$.

\begin{lemma}\label{Lem_instance_transformation} 
	Given an instance of the  MinMD$+$OCTC$_{line}$ problem $\mathcal{I}=(\mathcal{T},\mathcal{B})$, where the coordinates of base station $b_k$ are $(x(b_k),y(b_k))$, and the opening cost is $c_k$. Construct a new instance $\mathcal{I}'=(\mathcal{T},\mathcal{B}')$, where the coordinates of base station $b_k'$ is $(x(b_k),|y(b_k)|)$, and the opening cost is $c_k$. Then $opt(\mathcal{I})=opt(\mathcal{I}').$
\end{lemma}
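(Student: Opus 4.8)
The plan is to show that the map $(x,y)\mapsto(x,|y|)$ changes no quantity entering the objective, so that $\mathcal{I}$ and $\mathcal{I}'$ are equivalent in the strong sense that their feasible solutions are in cost-preserving correspondence. Set up coordinates so that $L$ is the $x$-axis. In MinMD$+$OCTC$_{line}$ a mobile sensor emitted from a base station at position $q=(x(b_k),y(b_k))$ is driven to some point $p=(a,0)$ on $L$ and then covers exactly the targets lying in $[a-r,a+r]$; the contribution of that sensor to the objective is the Euclidean travel distance $\|q-p\|=\sqrt{(x(b_k)-a)^2+y(b_k)^2}$. The single observation doing all the work is that this distance depends on $y(b_k)$ only through $y(b_k)^2=|y(b_k)|^2$, i.e. the distance from a base station to any point of $L$ is invariant under $(x,y)\mapsto(x,|y|)$.

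First I would fix an optimal solution of $\mathcal{I}$, described as a subset $S\subseteq\mathcal{B}$ of opened base stations together with, for each opened station, a finite multiset of destination points on $L$ for the sensors it emits, such that the union of the induced radius-$r$ intervals covers $\mathcal{T}$. From it I build a solution of $\mathcal{I}'$ by opening the mirror stations $\{b_k':b_k\in S\}$ and emitting from $b_k'$ sensors to exactly the same destination points on $L$ that $b_k$ used. Feasibility is immediate because the set of covered targets is determined solely by the destination points on $L$, which are unchanged; the opening cost is unchanged because $b_k'$ has cost $c_k$; and by the observation above every sensor's travel distance is unchanged. Hence $opt(\mathcal{I}')\le opt(\mathcal{I})$.

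For the reverse inequality I would run the identical construction starting from an optimal solution of $\mathcal{I}'$: since $|y(b_k)|^2=y(b_k)^2$, emitting from $b_k$ the destinations that $b_k'$ used again preserves coverage, opening cost, and every travel distance, giving $opt(\mathcal{I})\le opt(\mathcal{I}')$. Combining the two inequalities yields $opt(\mathcal{I})=opt(\mathcal{I}')$. There is essentially no obstacle: the only thing to be slightly careful about is to phrase ``solution'' so that it is precisely a choice of opened stations plus destination points on $L$ (the ascending $x$-ordering of base stations is preserved, since reflection fixes $x$-coordinates), after which equality of costs is term by term. The reason the lemma is worth isolating is purely organizational --- it lets the subsequent dynamic program assume all base stations lie weakly above $L$.
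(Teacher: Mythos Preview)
Your proof is correct and follows essentially the same approach as the paper: both construct, from an optimal solution of one instance, a feasible solution of the other by opening the corresponding reflected stations and sending sensors to the identical destination points on $L$, then observe that opening costs and travel distances are preserved termwise (since $\sqrt{(x(b_k)-a)^2+y(b_k)^2}=\sqrt{(x(b_k)-a)^2+|y(b_k)|^2}$), yielding both inequalities by symmetry. Your write-up is in fact slightly more explicit than the paper's about why the distances coincide and about what constitutes a ``solution.''
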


\begin{proof}
	There is an optimal solution $OPT(\mathcal{I})$ for instance $\mathcal{I}$. We next use $OPT(\mathcal{I})$ to construct a feasible solution $APX(\mathcal{I}')$ for instance $\mathcal{I}'$. First, base station $b_k'$ is opened if and only if base station $b_k$ is opened. Secondly, for any sensor $s$ with coordinates $(x(s),0)$ emitted by the base station $b_k$ in $OPT(\mathcal{I})$, place a sensor $s'$ emitted by $b_k'$ at the same position in the $\mathcal{I}'$ instance. Let $d_k(s)$ is the movement distance of sensor $s$ emitted by base station $b_k$, one has $d_k(s)=d_{k'}(s')$. Obviously, the solution after the construction is a feasible solution of $\mathcal{I}'$, and its value is $opt(\mathcal{I})$, which means $opt(\mathcal{I}' ) \leq opt(\mathcal{I})$. Similarly, we can get $opt(\mathcal{I} ) \leq opt(\mathcal{I}')$. Therefore, $opt(\mathcal{I}) = opt(\mathcal{I}')$.
\end{proof}
As can be seen from the proof of Lemma \ref{Lem_instance_transformation}, $OPT(\mathcal{I}')$ can be easily converted into $OPT(\mathcal{I})$. Next, we only consider the case where the base station is above the straight line $L$.
In the following, we present a key observation of algorithm.

\begin{lemma}\label{Lem1} 
There exists an optimal solution to the MinMD$+$OCTC$_{line}$ instance such that the locations of those mobile sensors emitted by a same base station are consecutively arranged on the line. That is, if there are two sensors $s,s'$ emitted by a same base $b_k$, then any sensor between $s,s'$ on the line are emitted by $b_k$.  Furthermore, if $k<k'$, the sensor emitted by $b_k$ must be on the left of the sensor emitted by $b_{k'}$.
\end{lemma}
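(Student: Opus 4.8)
The plan is to prove the lemma by an exchange argument carried out on the \emph{assignment} of sensors to base stations, with the sensor locations on $L$ held fixed, together with a single-crossing property of the moving-distance functions. For a base station $b_j$, write $d_j(t)=\sqrt{(t-x(b_j))^2+y(b_j)^2}$ for the moving distance incurred when $b_j$ emits a sensor to the point $(t,0)\in L$; this is exactly the planar distance from $(t,0)$ to $b_j$. Recall that after the earlier reduction we may assume every base station lies on or above $L$, and that the base stations are indexed so that $i<j$ implies $x(b_i)\le x(b_j)$.

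First I would reduce to solutions in which every sensor is emitted by the opened base station nearest to its location. Start from any optimal solution, with set of opened base stations $S$, multiset of sensor locations $P\subseteq L$, and assignment $\beta\colon P\to S$. Re-assign each location $t\in P$ to a base station of $S$ minimizing $d_\cdot(t)$ (breaking ties in favour of the smaller index), and keep opened only those base stations that are actually used afterwards. All targets remain covered since the set of locations is unchanged; the opening cost does not increase since the new opened set is a subset of $S$; and the total moving distance does not increase since each term $d_{\beta(t)}(t)$ is replaced by $\min_{j\in S}d_j(t)$. Hence the resulting solution is again optimal, and in it every sensor is emitted by the nearest opened base station to its location.

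Next I would establish the single-crossing property and use it to rule out ``crossings''. For $i<j$, using $x(b_i)\le x(b_j)$,
\[
d_i(t)^2-d_j(t)^2=2\bigl(x(b_j)-x(b_i)\bigr)\,t+\bigl(x(b_i)^2+y(b_i)^2-x(b_j)^2-y(b_j)^2\bigr)
\]
is an affine function of $t$ with non-negative slope; since $d_i,d_j\ge 0$, the sign of $d_i(t)-d_j(t)$ is non-decreasing in $t$, so this difference is $\le 0$ for all small $t$, is $\ge 0$ for all large $t$, and is identically $0$ only when $b_i=b_j$. Now suppose, for contradiction, that in the optimal solution from the previous step $b_k$ emits a sensor at a location $q$ and $b_{k'}$ emits one at a location $p$ with $k<k'$ but $q>p$ — exactly the negation of the claimed structure. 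Nearness of assignments gives $d_k(q)\le d_{k'}(q)$ (since $b_{k'}$ is opened) and $d_{k'}(p)\le d_k(p)$ (since $b_k$ is opened); thus $d_k^2-d_{k'}^2$ is $\ge 0$ at $p$ and $\le 0$ at the larger argument $q$, and being affine and non-decreasing it must vanish at both $p$ and $q$, hence identically, i.e. $b_k=b_{k'}$ — impossible for $k<k'$ (once coincident base stations are merged). Therefore no such pair exists, which is precisely the statement that each base station's sensors occupy a contiguous block and the blocks appear from left to right in order of base-station index.

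I expect the main obstacle to be getting the first step right. The tempting move is to repair a crossing by simply interchanging the two offending sensors, but when the base stations have different heights above $L$ this local swap can strictly \emph{increase} the total moving distance, so a pairwise exchange fails outright; re-routing all sensors simultaneously to their nearest opened base station is what makes the argument go through, after which the affine identity above does the real work. Beyond that, only routine care is needed for degenerate configurations: coincident sensor locations (a redundant sensor may be deleted), two distinct base stations with the same $x$-coordinate (one strictly dominates the other in moving distance, hence is never used once assignments are made nearest), and ties in $d_\cdot(t)$ (resolved consistently by index so the contiguous blocks are genuinely ordered).
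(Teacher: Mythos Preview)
Your argument is correct and shares the paper's starting move---reassign every sensor to its nearest opened base station---but then proceeds somewhat differently. The paper treats the two assertions separately and geometrically: for consecutiveness it argues with the three disks $C_{s,b_k}$, $C_{s',b_k}$, $C_{s'',b_k}$ that a base station $b_{k'}$ serving a sensor strictly between $s$ and $s'$ would be forced to sit inside $C_{s'',b_k}$ yet outside the interiors of $C_{s,b_k}$ and $C_{s',b_k}$, hence to coincide with $b_k$; for the left-to-right ordering it invokes the perpendicular bisector of the segment $b_kb_{k'}$, which splits $L$ into a ``closer to $b_k$'' half and a ``closer to $b_{k'}$'' half. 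Your single-crossing identity $d_k(t)^2-d_{k'}(t)^2=2\bigl(x(b_{k'})-x(b_k)\bigr)t+\text{const}$ is precisely the algebraic content of that bisector, and by recasting both conclusions as the single statement ``no crossing pair $(p,q)$ exists'' you dispatch consecutiveness and ordering in one stroke, which is tidier and avoids the unproved pictorial step in the paper's disk argument. Your side remark that a naive pairwise swap can strictly increase cost (because $d_k-d_{k'}$ need not be monotone even though $d_k^2-d_{k'}^2$ is affine) is a genuine subtlety; the paper sidesteps it the same way you do, by first passing to the globally nearest assignment rather than attempting local repairs.
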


\begin{proof}
Let $C_{s,b_k}$ be the disk centered at center of sensor $s$ with radius $d_k(s)$. Suppose sensor $s$ is emitted by base station $b_k$. In order to minimize the moving distance, $b_k$ must be the base station nearest to $s$ among all opened base stations. 
Then
\begin{equation}\label{eq0414-3}
\mbox{there is no other base station lying in the interior of disk $C_{s,b_k}$.}
\end{equation}

\begin{figure}[htbp]
	\centering
	\begin{tikzpicture}
		
		\coordinate (A) at (0,0);
		\coordinate (B) at (5,0);
		\coordinate (C) at (1.5,2);
		\coordinate (D) at (2,0);
		
		\draw (A) -- (B);
		\draw[dashed] (A) -- (C);
		\draw[dashed] (B) -- (C);

		\filldraw[black] (A) ++(-0.1,0) -- ++(0.1,0.2) -- ++(0.1,-0.2) -- cycle; 
		\filldraw[black] (B) ++(-0.1,0) -- ++(0.1,0.2) -- ++(0.1,-0.2) -- cycle; 
		\filldraw[black] (D) ++(-0.1,0) -- ++(0.1,0.2) -- ++(0.1,-0.2) -- cycle;
		
		\filldraw[black] (C) ++(-0.15,-0.15) rectangle ++(0.3,0.3);

		\node at (A) [below left] {$\text{s}$};
		\node at (B) [below right] {$\text{s}'$};
		\node at (C) [above] {$\text{b}_k$};
		\node at (C) [below] {$\text{A}$};
		\node at (D) [below right] {$\text{s}''$};

		\draw[dashed] (A) let \p1 = ($(C)-(A)$) in circle ({veclen(\x1,\y1)});

		\draw[dashed] (B) let \p1 = ($(C)-(B)$) in circle ({veclen(\x1,\y1)});
		\path let \p1 = ($(C)-(D)$) in \pgfextra{\xdef\DCRadius{veclen(\x1,\y1)}};

		\fill[gray, opacity=0.2] (D) circle (\DCRadius);
		\draw[dashed] (D) circle (\DCRadius); 
		
	\end{tikzpicture}
	\caption{An illustration for the proof of consecutive arrangement of sensors. }\label{fig0414-2}
\end{figure}
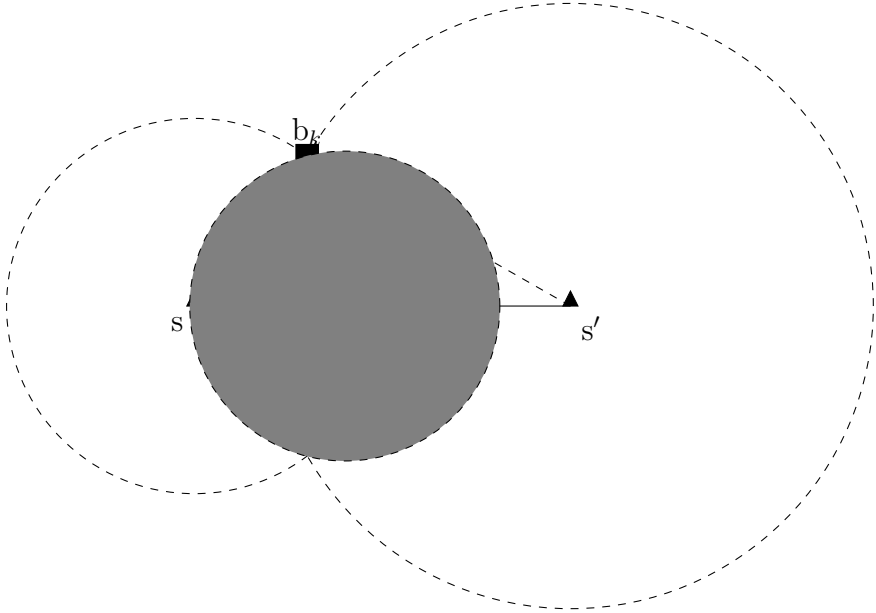

Suppose $s,s'$ are two mobile sensors emitted by a same base station $b_k$ such that there is a sensor $s''$ lying between them that is emitted by another base station $b_{k'}\neq b_k$ (see Fig. \ref{fig0414-2} for an illustration). By observation \eqref{eq0414-3},
\begin{equation}\label{eq0419-1}
\mbox{$b_{k'}$ cannot lie in the interior of $C_{s,b_k}\cup C_{s',b_k}$.}
\end{equation}
On the other hand, since $\|d_{k'}(s'')\|\leq \|d_k(s'')\|$,
\begin{equation}\label{eq0419-2}
\mbox{$b_{k'}$ must lie in disk $C_{s'',b_k}$.}
\end{equation}
Properties \eqref{eq0419-1} and \eqref{eq0419-2} can hold simultaneiously only when $\|d_{k'}(s'')\|= \|d_k(s'')\|$. This means that base station $b_{k'}$ can only be located at the position of  base station $b_k$ in the figure \ref{fig0414-2}. 
For any sensor emitted by base station $b_{k'}$, we can let base station $b_k$ emit a sensor to the same location, and the moving distance cost is the same. Therefore base station $b_{k'}$ does not need to be opened to reduce the opening cost. This shows that any sensor between $s$ and $s'$ is emitted by $b_k$.

Consider any two open base stations $b_k,b_{k'}$. Draw a midpoint perpendicular for the segment between $b_k$ and $b_{k'}$. This midpoint perpendicular can divide line $L$ into two segments $l_1$ and $l_2$, where $l_1$ (resp. $l_2$) is the subline of $L$ consisting of points on the left (resp. right) of vertical bisector line, which can be shown in figure \ref{fig_Mid_line}. By the property of midpoint perpendicular, the sensors located on $l_1$ (resp. $l_2$) are emitted by $b_k$ (resp. $b_{k'}$). Therefore, if $k<k'$, the sensor emitted by $b_k$ must be on the left of the sensor emitted by $b_{k'}$.
  
  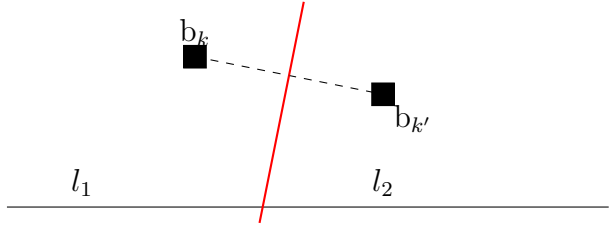
\begin{figure}[htbp]
  	\centering
  	\begin{tikzpicture}
  		
  		\coordinate (A) at (0,0);
  		\coordinate (B) at (8,0);
  		\coordinate (C) at (2.5,2);
  		\coordinate (D) at (5,1.5);
  		\coordinate (E) at (1,0);
  		\coordinate (F) at (5,0);
  		
  		\draw (A) -- (B);
  		\draw[dashed](C)--(D);

  		\filldraw[black] (D) ++(-0.15,-0.15) rectangle ++(0.3,0.3); 
  		\filldraw[black] (C) ++(-0.15,-0.15) rectangle ++(0.3,0.3); 
  		\node at (C) [above] {$\text{b}_k$};
  		\node at (D) [below right] {$\text{b}_{k'}$};
  		\node at (E) [above] {$l_1$};
  		\node at (F) [above] {$l_2$};
  		
  		\coordinate (MidCD) at ($ (C)!0.5!(D) $);
  		
  		\draw[red, thick] (MidCD) -- ($(MidCD)!1cm!90:(D)$); 
  		\draw[red, thick] (MidCD) -- ($(MidCD)!2cm!-90:(D)$);

  	\end{tikzpicture}
  	\caption{An illustration for the proof of sensors orderliness. }\label{fig_Mid_line}
  \end{figure}
  
\end{proof}

Suppose the targets are ordered as $t_1,\ldots,t_n$ from left to right on the line. By Lemma \ref{Lem1}, every open base station emits the sensors located in a consecutive interval. Therefore, we have the following observation for the targets on the line.

\begin{observation}\label{Observation_target_interval}
If $t_i$ and $t_j$ are covered by the sensor emitted by the base station $b_k$, all the targets between $t_i$ and $t_j$ are covered by the sensors emitted by $b_k$.
\end{observation}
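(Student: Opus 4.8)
The plan is to deduce the observation directly from the structure of the optimal solution provided by Lemma~\ref{Lem1}, without any further exchange argument. Fix such an optimal solution, let $b_k$ be an opened base station, and let $q_1<q_2<\cdots<q_m$ be the abscissae on $L$ of the sensors it emits. By Lemma~\ref{Lem1} these sensors occupy a consecutive block of $L$: there is no sensor (of any base station) with abscissa strictly between two adjacent values $q_t$ and $q_{t+1}$, and every sensor emitted by a base station $b_{k'}$ with $k'<k$ has abscissa at most $q_1$, while every sensor emitted by a $b_{k'}$ with $k'>k$ has abscissa at least $q_m$.

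The heart of the argument is the claim that a target is covered by (some sensor of) $b_k$ if and only if its abscissa lies in $[q_1-r,\,q_m+r]$. The ``only if'' direction is immediate, since a target covered by $b_k$ is within distance $r$ of some $q_t$. For the ``if'' direction I would argue by contradiction, invoking feasibility of the solution: suppose a target $t_\ell$ has abscissa $y_\ell\in[q_1-r,q_m+r]$ but is not within distance $r$ of any $q_t$. Then $y_\ell$ must fall into an internal gap of the block, i.e.\ $q_t+r<y_\ell<q_{t+1}-r$ for some $t$ with $1\le t<m$. Since no sensor of $b_k$ and---by the ordering in Lemma~\ref{Lem1}---no sensor of any other base station sits in the open interval $(q_t,q_{t+1})$ (every foreign sensor has abscissa $\le q_1<y_\ell-r$ or $\ge q_m>y_\ell+r$), the target $t_\ell$ is left uncovered, contradicting feasibility. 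This proves the claim.

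With the claim in hand the observation follows at once: if $t_i$ and $t_j$ are both covered by $b_k$ then their abscissae lie in $[q_1-r,q_m+r]$, hence so does the abscissa of any target $t_\ell$ lying between them along $L$, and therefore $t_\ell$ is covered by a sensor emitted by $b_k$. I expect the only genuine obstacle to be the contradiction step of the claim, where one must combine two facts of different flavour---the consecutiveness and ordering of sensors from Lemma~\ref{Lem1}, which forbids any foreign sensor inside a gap of $b_k$'s block, and the feasibility constraint that every target be covered---to conclude that $b_k$'s block contains no target-bearing gap; the remaining steps are elementary manipulations with intervals on the line.
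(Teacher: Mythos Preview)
Your argument is correct and follows the same route the paper intends: the observation is presented there without proof, merely as an immediate consequence of Lemma~\ref{Lem1} (``By Lemma~\ref{Lem1}, every open base station emits the sensors located in a consecutive interval. Therefore, we have the following observation\ldots''). You have simply written out what the paper leaves implicit, and in particular you make explicit the one nontrivial ingredient the paper glosses over---that feasibility of the solution forbids any target from sitting in an internal gap $(q_t+r,\,q_{t+1}-r)$ of $b_k$'s block, since by the consecutiveness in Lemma~\ref{Lem1} no foreign sensor can reach it either. That is exactly the right point to highlight.

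Two small cosmetic remarks. First, you reuse the symbol $m$ for the number of sensors emitted by $b_k$, but the paper already reserves $m$ for $|\mathcal B|$; pick a different letter. Second, your appeal to the \emph{ordering} clause of Lemma~\ref{Lem1} (sensors of $b_{k'}$ with $k'<k$ lie left of $q_1$, etc.) is not actually needed: the \emph{consecutiveness} clause alone already forces every foreign sensor to have abscissa $\le q_1$ or $\ge q_m$, which is all your contradiction step uses.
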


 Let the coordinate of $t_i$ is $(x(t_i),0)$. From observation \ref{Observation_target_interval}, we can say that base station $b_k$ is responsible for the interval $I_k=[x(t_i),x(t_j)]$, and get a conclusion similar to Lemma \ref{Lem1}, that is, for any base station, the target interval it is responsible for is consecutive, and the intervals responsible for different base stations are also ordered.

In order to find out an optimal solution satisfying Lemma \ref{Lem1} and Observation \ref{Observation_target_interval}, it suffices to determine the interval $I_k$ that each opened base station $b_k$ is responsible for. This can be realized using a dynamic programming method.

{\bf Dynamic programming for MinMD$+$OCTC$_{line}$.}
By Lemma \ref{Lem1}, we can naturally define state variables as follows. For $i=1,\ldots,n$, let $c(\mathcal B_k,\mathcal T_i)$ be the minimum cost of using some base stations in $\mathcal B_k=\{b_1,...,b_k\}$ to be responsible for the targets in $\mathcal T_i=\{t_1,\ldots,t_i\}$, where the cost includes both the opening cost and the moving distance cost. For $k\in\{1,\ldots,m\}$ and $i,j\in \{1,\ldots,n\}$ with $i\geq j$, let $d_{k}(i-j+1,i)$ be the minimum moving distance cost of the sensors covering the targets in $I_k=[x(t_{i-j+1}),x(t_i)]$ that the base station $b_k$ is responsible for. We employ a ``bi-level'' dynamic programming method. In the outer-level dynamic programming, $c(\mathcal B_k,\mathcal T_i)$ are determined based on the values of $d_{k}(i-j+1,i)$. Note that $d_{k}(i-j+1,i)$ depends on the arrangement of sensors (that is, the destinations of the sensors) emitted by $b_k$. The inner-level dynamic programming is used to compute  $d_{k}(i-j+1,i)$. 

Status transition formula for $c(\mathcal B_k,\mathcal T_i)$ is

\begin{equation}\label{out_dp}
c(\mathcal B_k,\mathcal T_i)=\min\limits_{1\leq j\leq i}\{ c(\mathcal B_{k-1},\mathcal T_{i-j})+c_k+d_k(i-j+1,i),c(\mathcal B_{k-1},\mathcal T_i)\}.
\end{equation}

 with initial condition
	
	\begin{equation}\label{out_boundary}
		\left\{
		\begin{array}{lcl}
			c(\mathcal B_k,\emptyset)=0, & &  k \geq 0,\\
			
			c(\emptyset,\mathcal T_i)=\infty, & & i> 0. \\
		\end{array} \right.
	\end{equation}
	
When considering covering targets $\mathcal T_i$, due to Lemma \ref{Lem1}, if $b_i$ covers the targets in the interval $[x(t_{i-j+1}),x(t_i)]$, then for $b_k$ with $k>i$, it will not cover the targets on the left side of $t_{i-j + 1}$. Therefore, there are only two types of situations in $c(\mathcal B_k,\mathcal T_i)$. One is to add the service of base station $b_k$ to interval $[x(t_{i-j+1}),x(t_i)]$ on the basis of $c(\mathcal B_{k-1},\mathcal T_{i-j})$ for $1\leq j \leq i$, or chooses not to use $b_k$. In the latter case, the whole $T_i$ is served by $\mathcal B_{k-1}$. The optimal value will take the minimum of all these cases. 

We now consider the inner-level dynamic programming $d_{k}(i-j+1,i)$. 
We first show that the number of possible positions of the sensors associated with an interval is polynomial.
\begin{observation}\label{Ploy_sensor_position}
	For any base station $b_k$ and a interval $I_k=[x(t_{i-j+1}),x(t_i)]$ $(1 \leq j \leq i )$ 
	, the possible positions of the sensor emitting from base station $b_k$ within the target covering this interval are at most $2j+1$. 
\end{observation}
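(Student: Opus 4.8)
The plan is to exhibit, for each base station $b_k$ and each interval $I_k=[x(t_{i-j+1}),x(t_i)]$, an explicit set $P_k$ of at most $2j+1$ points on $L$, and then to argue that some optimal solution of the restricted form guaranteed by Lemma \ref{Lem1} and Observation \ref{Observation_target_interval} places every sensor emitted by $b_k$ to serve $I_k$ at a point of $P_k$. First I would recall the local structure: by Observation \ref{Observation_target_interval}, each single sensor emitted by $b_k$ covers a consecutive block $t_a,\dots,t_b$ of targets with $i-j+1\le a\le b\le i$ (a sensor of $b_k$ covering no target of $I_k$ is useless since $b_k$ is responsible only for $I_k$, and may be deleted). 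Since a sensor placed at $(x,0)$ covers precisely the targets whose $x$-coordinate lies in $[x-r,x+r]$, covering all of $t_a,\dots,t_b$ is equivalent to $x\in J_{a,b}:=[x(t_b)-r,\;x(t_a)+r]$, a nonempty interval (nonempty exactly when $x(t_b)-x(t_a)\le 2r$, which holds because the block is coverable by one sensor).

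Next I would use the geometry of the moving cost: a sensor of $b_k$ ending at $(x,0)$ travels distance $\sqrt{(x-x(b_k))^2+y(b_k)^2}$, which as a function of $x$ is strictly convex and minimized at $x=x(b_k)$. Hence, for a sensor constrained to cover the block $t_a,\dots,t_b$, the cheapest feasible placement is the point of $J_{a,b}$ nearest to $x(b_k)$, i.e. $x(b_k)$ itself when $x(b_k)\in J_{a,b}$, and otherwise one of the two endpoints $x(t_b)-r$ or $x(t_a)+r$. Thus, starting from any optimal solution of the restricted form, I slide each sensor of $b_k$ to the point of its interval $J_{a,b}$ nearest to $x(b_k)$; this does not increase the moving cost and does not change the set of targets the sensor covers inside $I_k$ (the new position still lies in $J_{a,b}$), so coverage, and hence feasibility, is preserved while the objective does not increase. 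In the resulting (still optimal) solution every sensor of $b_k$ serving $I_k$ lies in
\[
P_k=\{x(b_k)\}\cup\{\,x(t_\ell)-r:\ i-j+1\le\ell\le i\,\}\cup\{\,x(t_\ell)+r:\ i-j+1\le\ell\le i\,\},
\]
and $|P_k|\le 1+j+j=2j+1$, which is the claimed bound.

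The one point that requires care is the re-placement step: I must check that sliding a sensor toward $x(b_k)$ within $J_{a,b}$ cannot leave uncovered a target that was covered only by that sensor. Within $I_k$ this is immediate, since $J_{a,b}$ is by construction exactly the set of placements covering the whole block $t_a,\dots,t_b$; and targets outside $I_k$ are, by the responsibility structure of Observation \ref{Observation_target_interval}, covered by sensors emitted from other base stations and are therefore unaffected. It is worth noting, in anticipation of the inner dynamic programming for $d_k(i-j+1,i)$, that all $2j+1$ candidates may genuinely be needed across different block decompositions: when $b_k$ lies to the left of a candidate block one uses a ``$-r$'' endpoint, to the right of another a ``$+r$'' endpoint, and inside yet another $x(b_k)$ itself, so this bound is the natural size of the search space that the inner level must consider.
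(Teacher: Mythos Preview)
Your proof is correct and follows essentially the same idea as the paper's: the moving cost $\sqrt{(x-x(b_k))^2+y(b_k)^2}$ is strictly convex in $x$ and minimized at $x(b_k)$, so in an optimal placement each sensor either sits at the foot $x(b_k)$ or has a target of $I_k$ on one of its two boundary points, giving the candidate set $\{x(b_k)\}\cup\{x(t_\ell)\pm r:\ i-j+1\le\ell\le i\}$ of size at most $2j+1$. The paper phrases this as a direct perturbation (``if $x(s)<x(b_k)$ and no target lies on the left boundary, slide right''), while you package the same perturbation through the explicit feasibility interval $J_{a,b}$ and a projection of $x(b_k)$ onto it; the arguments are equivalent, with yours being slightly more explicit about why feasibility is preserved.
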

\begin{proof}
	Assume that the coordinates of base station $b_k$ are $(x(b_k), y(b_k))$. For any sensor $s$ transmitted by base station $b_k$, assume that the final coordinates are $(x(s), 0)$. If $x(s)< x(b_k)$, there must be a target on the left boundary of the sensor. Otherwise, sensor $s$ can reduce the cost of moving distance by moving to the right. The same is true for $x(s)> x(b_k)$. 
	$x(s)= x(b_k)$ corresponds to the sensor emitting vertically on the straight line. Because a target has two situations on the sensor boundary, there are at most $2j+1$ situations above.
\end{proof}

Record the $2j+1$ possible position sensors on the interval $[x(t_{i-j+1}),x(t_i)]$ that $b_k$ is responsible for covering as $G_{k}(i-j+1,i)$, and $S_i$ is the set of sensors covering target $i$. Denote the distance between any sensor $s$ to any target $b_k$ as $d_k(s)$.
 
$t_{i'_s}$ is the maximum subscript target on the left side of the sensor $s$ coverage.  We consider covering the targets on the interval $[x(t_{i-j+1}),x(t_i)]$ with the least number of sensors.

  The dynamic programming as follows, in which $0\leq i \leq n$.

\begin{equation}\label{in_dp}
	d_{k}(i-j+1,i)=\min\limits_{s\in G_{k}(i-j+1,i)\cap S_{i}}\{d_{k}(s)+d_{k}(i-j+1,i'_s)\}.
\end{equation}

The boundary condition of dynamic programming is

\begin{equation}\label{in_boundary}
	\begin{array}{lcl}
		
		d_{k}(l,i)=0, & & i<l. \\
	\end{array}
\end{equation}

Next we prove the optimality and time of dynamic programming.
\begin{theorem}\label{line_opt}
	There is an $O(n^{4}m)$ algorithm to solve MinMD$+$OCTC$_{line}$.
\end{theorem}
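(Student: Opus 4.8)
The plan is to prove two things about the bi-level dynamic programming described above: (1) that it correctly computes $opt(\mathcal{I})$ for an instance satisfying the reduction of Lemma~\ref{Lem_instance_transformation}, and (2) that the whole computation runs in $O(n^4 m)$ time. Correctness follows the standard pattern for dynamic programming, but split across the two levels; the running-time bound is where the careful bookkeeping lies, since a naive reading of the recurrences gives a worse exponent.

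First I would establish correctness of the outer recurrence \eqref{out_dp}. Fix an optimal solution of the form guaranteed by Lemma~\ref{Lem1} and Observation~\ref{Observation_target_interval}: each opened base station is responsible for a consecutive block of targets, and these blocks are ordered consistently with the indices of the base stations. I would argue by induction on $k$ (and on $i$) that $c(\mathcal B_k,\mathcal T_i)$ as defined by \eqref{out_dp}--\eqref{out_boundary} equals the minimum cost of covering $t_1,\dots,t_i$ using only base stations among $b_1,\dots,b_k$. The base cases are exactly the boundary conditions \eqref{out_boundary}. For the inductive step, in an optimal solution for $(\mathcal B_k,\mathcal T_i)$ either $b_k$ is not used --- giving the term $c(\mathcal B_{k-1},\mathcal T_i)$ --- or $b_k$ is used and, by the orderliness of blocks, $b_k$ is responsible for a suffix $[x(t_{i-j+1}),x(t_i)]$ for some $1\le j\le i$, while $t_1,\dots,t_{i-j}$ are covered by $\mathcal B_{k-1}$; the cost decomposes additively into $c(\mathcal B_{k-1},\mathcal T_{i-j})+c_k+d_k(i-j+1,i)$, where the crucial point is that the cheapest way to cover a fixed target interval with sensors from a fixed base station is independent of what happens on the other intervals. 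Taking the minimum over the two cases and over $j$ yields \eqref{out_dp}. The answer is read off as $c(\mathcal B_m,\mathcal T_n)$.

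Next I would justify the inner recurrence \eqref{in_dp} computing $d_k(i-j+1,i)$, the minimum total moving distance for $b_k$ to cover the target interval $I_k=[x(t_{i-j+1}),x(t_i)]$. By Observation~\ref{Ploy_sensor_position} it suffices to place each sensor at one of the $2j+1$ candidate positions $G_k(i-j+1,i)$; moreover one may assume the covering uses the fewest sensors needed, processed greedily from left to right. I would set up the recurrence on the rightmost uncovered target: to cover up to $t_i$ one picks some sensor $s\in G_k(i-j+1,i)\cap S_i$ covering $t_i$, pays its moving cost $d_k(s)$, and is left to cover the prefix ending at $t_{i'_s}$, the largest-index target strictly to the left of $s$'s coverage, recursively costing $d_k(i-j+1,i'_s)$; the boundary case \eqref{in_boundary} handles an empty interval. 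Optimality follows because in any optimal placement the sensor covering $t_i$ can be taken to be a candidate position, and the remaining sensors form an optimal solution for the residual prefix (an exchange/cut-and-paste argument, using that moving costs of distinct sensors are independent).

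Finally, the running-time analysis. For each $k$ there are $O(n^2)$ pairs $(i,j)$; computing the candidate set $G_k(i-j+1,i)$ and the relevant distances takes $O(j)=O(n)$ time, and the inner DP \eqref{in_dp} for a fixed interval has $O(j)=O(n)$ states each evaluated by a minimum over $O(j)=O(n)$ candidate sensors, so filling in all $d_k(\cdot,\cdot)$ for a fixed $k$ costs $O(n^3)$; over all $k$ this is $O(n^3 m)$. The outer DP has $O(nm)$ states, each computed by \eqref{out_dp} as a minimum over $O(n)$ choices of $j$ given the precomputed $d_k$ values, for $O(n^2 m)$ more. I expect the main obstacle to be getting the inner-level accounting to close at $O(n^4 m)$ overall rather than something larger: one must be careful that the sets $G_k(i-j+1,i)$ and the ``left-neighbor'' indices $i'_s$ can be prepared within the claimed budget (e.g.\ by sorting candidate positions once per $(k,\text{interval})$ and locating each sensor's left target by binary search), and that the inner DP is not recomputed redundantly across different outer states. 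Assembling these bounds gives the claimed $O(n^4 m)$ time, completing the proof.
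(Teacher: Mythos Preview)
Your proposal is correct and follows essentially the same bi-level dynamic programming approach as the paper, invoking Lemma~\ref{Lem1} and Observation~\ref{Ploy_sensor_position} in the same way. Your correctness argument by induction is more explicit than the paper's short non-overlap transformation, and your time accounting (precomputing all $d_k(\cdot,\cdot)$ in $O(mn^3)$) is actually a bit tighter than the paper's $O(mn^4)$ bookkeeping, but both establish the stated bound.
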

\begin{proof}
It can be seen from the dynamic programming equation \ref{out_dp} that if the base station $b_k$ is responsible for the interval $[x(t_{i-j+1}),x(t_i)]$,  the set of base stations $\mathcal{B}_{k-1}$ will be responsible for the interval $[x(t_1),x(t_{i-j})]$. So in the solution we found, the intervals responsible for different base stations do not overlap. We next show that the optimal solution satisfies this property.

 For an optimal solution $OPT$ that satisfies Lemma \ref{Lem1}, assume that the base stations opened are $b_{1}',...,b_{k}'$, and the longest intervals that these base stations are responsible for are $I_1,...,I_k$. Let $I_{1}'=I_1$, $I_{2}'=I_2 \setminus I_1$, ..., $I_{k}'=I_k\setminus I_{k-1}$. The base station opened and the position of the emitted sensor remain unchanged. After such operations, there is no overlap in the intervals that the base station is responsible for. $OPT$ is a feasible solution under interval partitioning $\{I_{1}',...,I_{k}'\}$. However, dynamic programming can find the minimum value under this partition, that is, it can find the optimal solution.

	In Equation \ref{out_dp}, there are $O(mn)$ state variables to be calculated, each state variable is calculated by the sub-state variables that $O(n)$ has prepared. In addition, the inner dynamic programming will cost $O(n^{2})$, there are $O(n)$ state variables, each state variable needs to be calculated $O(n)$ times. Therefore, the time of the planning dynamic algorithm is $O(n^{4}m)$.

\end{proof}

The above question requires covering all targets. If it is only required to cover at least $K$ targets, we call it partial MinMD$+$OCTC$_{line}$ problem.
The use of the sensor depends only on the division of the target intervals, and targets within an interval can only be served by the sensors of an opened base station. Thus, partial MinMD$+$OCTC$_{line}$ problem, the only difference from the MinMD$+$OCTC$_{line}$ problem is that it covers partial targets in an interval instead of all, can be easily solved using dynamic programming, due to the retention Lemma \ref{Lem1}. 

Here we give dynamic programming for the partial MinMD$+$OCTC$_{line}$ problem. The goal of partial MinMD$+$OCTC$_{line}$ problem is to cover at least $K$ points so that the total cost is minimal. Since Lemma \ref{Lem1} is satisfied, outer dynamic programming, like Equation \ref{out_dp}, aims to partition intervals. The only difference is that in Equation \ref{in_dp}, in partial MinMD$+$OCTC$_{line}$ problem, some targets are not considered to cover. Let $d_k (l,i,K)$ be the minimum moving distance cost of using the sensors emitted by base station $b_k$ to cover at least $K$ targets in interval $[x(t_l), x(t_i)]$. Denote the current targets of cover as $\tilde{\mathcal {T}}$, and the target covered by sensor $s$ as $\mathcal T_{s}$, we have the inner dynamic programming is

\begin{equation}\label{partial}
	d_{k}(l,i,K)=	\min\{\min\limits_{s\in G_{k}(l,i)\cap S_{i}}\{d_{k}(s)+d_{k}(l,r_s,K-|\mathcal T_{s} \setminus \tilde{\mathcal {T}}|)\},d_{k}(l,i-1,K)\}.
\end{equation}

The boundary conditions are increased by two on the basis of \ref{in_boundary},
\begin{equation}\label{eq1-1}
	\left\{
	\begin{array}{lcl}
		d_{k}(l,i,K)=+\infty, & &  i <l , K>0,\\
		d_{k}(l,i,K)=0, & &  i <l, K=0,\\
		d_{k}(l,i,K)=0, & & K\leq 0. \\
	\end{array} \right.
\end{equation}
The optimal solution to the partial MinMD$+$OCTC$_{line}$ problem can be obtained in $O(n^{5}m)$ time

\subsection{Polynomial-Time Algorithm for General MinMD$+$OCTC$_{line}$}\label{sec_ploy_alg_for_general}

For the optimal solution to the MinMD$+$OCTC$_{line}$ problem, the centers of the sensors all location on the line $L$. However, coverage can be achieved even if the center of the sensor is not necessarily located on a straight line. We note this as General MinMD$+$OCTC$_{line}$ problem. The key to design of dynamic programming algorithm for MinMD$+$OCTC$_{line}$ problem is that the number of the potential sensor positions in the optimal solution is polynomial, and Lemma \ref{Lem1}. We will show next that these two points still apply to general MinMD$+$OCTC$_{line}$ problems.

\begin{observation}\label{general_Ploy_sensor_position}
	In the optimal solution of General MinMD$+$OCTC$_{line}$ problem, the number of potential sensor locations are polynomial.
\end{observation}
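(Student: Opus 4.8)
The plan is to argue that, just as in the on-line case (Observation~\ref{Ploy_sensor_position}), every sensor in an optimal solution of the General MinMD$+$OCTC$_{line}$ problem can be ``snapped'' to a canonical position determined by a small number of geometric constraints, and that the total number of such canonical positions is polynomial in $n$ and $m$. First I would fix an optimal solution and consider a single sensor $s$ emitted by an opened base station $b_k$; the covering disk of radius $r$ around $s$ must contain a nonempty subset of the targets on $L$, and by Observation~\ref{Observation_target_interval} (which, together with Lemma~\ref{Lem1}, I will show still holds here) this subset is a consecutive block $t_a,\dots,t_b$. So the center of $s$ lies in the region $R_{a,b}=\bigcap_{a\le \ell\le b} D(t_\ell,r)$, the common intersection of the radius-$r$ disks around those targets, and $s$ contributes moving cost equal to the Euclidean distance from $b_k$ to its center.

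The heart of the argument is a local optimality / exchange step: if the center of $s$ lies in the \emph{interior} of $R_{a,b}$, then we may move $s$ strictly toward $b_k$ along the segment joining them without losing coverage of $t_a,\dots,t_b$, strictly decreasing the moving distance — contradicting optimality unless the center already coincides with $b_k$. Hence we may assume the center of $s$ is either the orthogonal projection-type point ``$b_k$ itself'' (when $b_k\in R_{a,b}$) or lies on the \emph{boundary} of $R_{a,b}$. A point on the boundary of $R_{a,b}$ lies on the boundary circle $\partial D(t_\ell,r)$ for some target $t_\ell$ with $a\le \ell \le b$. Now among all boundary points of $R_{a,b}$ on a fixed circle $\partial D(t_\ell,r)$, the one minimizing distance to $b_k$ is again pinned down: it is either the point of $\partial D(t_\ell,r)$ nearest to $b_k$ (on the line through $t_\ell$ and $b_k$), or an endpoint of the arc $\partial D(t_\ell,r)\cap R_{a,b}$, i.e.\ an intersection point of two boundary circles $\partial D(t_\ell,r)\cap\partial D(t_{\ell'},r)$. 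I would enumerate: (i) the $m$ base-station points; (ii) for each pair $(\ell,k)$, the nearest point of $\partial D(t_\ell,r)$ to $b_k$ — $O(nm)$ points; (iii) for each pair $(\ell,\ell')$ the at most two intersection points of the two circles — $O(n^2)$ points. Taking the union over all $O(n^2)$ choices of the block $[a,b]$ (or just over all $\binom{n}{2}$ target pairs, since only the extreme targets and their circles matter), the total number of candidate centers is $O(n^2 m + n^3)$, which is polynomial; this is exactly what Observation~\ref{general_Ploy_sensor_position} asserts.

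The main obstacle I anticipate is making the ``consecutive block'' claim rigorous off the line. In the on-line case the targets covered by a disk form an interval automatically because the disk meets $L$ in an interval; that is still true here, so a \emph{single} sensor covers a contiguous set of targets for free. What needs care is the full analogue of Lemma~\ref{Lem1} — that sensors from the same base station occupy a contiguous block and that blocks from different base stations are ordered — because the earlier proof used the disks $C_{s,b_k}$ centered on $L$; I would redo that argument with $s$ allowed off $L$, noting that the nearest-base-station property and the perpendicular-bisector argument only use distances in the plane and so go through verbatim. A secondary subtlety is that $R_{a,b}$ can be empty or a single point (if the block of targets is too spread out to be $r$-covered by one disk), but such blocks simply never arise in a feasible solution, so they contribute no candidates; and when $R_{a,b}$ is a single point that point is automatically an intersection of two boundary circles, already enumerated. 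Once Observation~\ref{general_Ploy_sensor_position} is in place, the bi-level dynamic program of Section~\ref{sec_ploy_alg_for_min} transfers with only the candidate-position set enlarged, giving a polynomial-time exact algorithm for the general problem.
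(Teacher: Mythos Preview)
Your argument is correct and follows essentially the same route as the paper: in an optimal solution each sensor must have at least one target on its coverage boundary (otherwise slide it toward $b_k$), and if it is not at the point of that boundary circle nearest to $b_k$ then a second target must also lie on the boundary, yielding $O(mn^2)$ candidate centers --- your $R_{a,b}$ framing just packages this same case analysis. Note that the consecutive-block property you spend time on is not actually needed for this observation (a single disk meets $L$ in an interval automatically, as you yourself remark), and the paper establishes the off-line analogue of Lemma~\ref{Lem1} separately, as Lemma~\ref{Lem2}, for use in the dynamic program rather than here.
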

\begin{proof}
 Similar to the proof of observation \ref{Ploy_sensor_position}, there is at least one target on the coverage boundary of the final location of all sensors. Therefore, we only need to explain that for any target, taking it as a target on the coverage boundary, the number of potential locations of sensors emitted by any base station is polynomial. Assume $t_i$ is a target on the boundary of the sensor emitted by $b_k$, $s$ is the sensor that is emitted by base station $b_k$ and has the minimum moving distance cost with $t_i$ on the coverage boundary. If the sensor covering $t_i$ is not sensor $s$ but sensor $s'$, there must be a target $t_j$ on the coverage boundary of sensor $s'$. Otherwise, the cost of moving distance can be reduced by moving closer to $s$. There are $O(n)$ situations. Therefore, for a base station $b_k$ in the optimal solution, the final number of locations of the sensors it emits is $O(n^2)$, total number is $O(mn^2)$.
	
\end{proof}

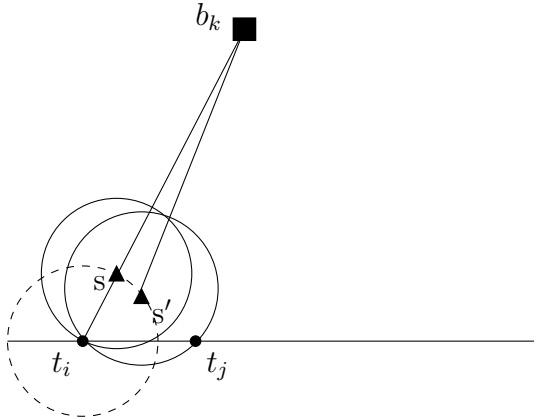
\begin{figure}[htbp]
	\centering
	\begin{tikzpicture}
		
		\draw (0,0) -- (7,0);

		\filldraw[black] (3,4) rectangle ++(0.3,0.3);

		\filldraw[black] (1,0) circle (2pt);
		\node at (1,0) [below left] {$t_i$};
        \node at (3,4) [above left] {$b_k$};		
		
		\draw[dashed] (1,0) circle (1);

		\draw (3.15,4.15) -- (1,0);
		\coordinate (intersection) at (1.45, 1);
		\node at (intersection) [below left] {$\text{s}$};
		
		\filldraw[black] (intersection) -- ++(-0.1,-0.2) -- ++(0.2,0) -- cycle;
		\draw (1.45,0.9) circle (1);
		
		\coordinate (intersection1) at (1.78, 0.7);
		\node at (intersection1) [below right] {$\text{s}'$};

		\filldraw[black] (intersection1) -- ++(-0.1,-0.2) -- ++(0.2,0) -- cycle;
		\draw (1.78,0.7) circle (1.02);
		\draw (3.15,4.15) -- (intersection1);
		\filldraw[black] (2.5,0) circle (2pt);
		\node at (2.5,0) [below right] {$t_j$};
	\end{tikzpicture}
	\caption{An illustration of the number of sensor locations being polynomial. }\label{fig0706-3}
\end{figure}

The first point is established for General MinMD$+$OCTC$_{line}$ problem.
 We can still only consider that any base station $b_k$ has $y(b_k)\geq 0$. The only difference in the proof is that for any sensor $s$ with coordinates $(x(s),y(s))$ in $OPT(\mathcal{I})$, if it is emitted by the base station $b_k$ with $y(b_k)>0$, place a sensor $s'$ emitted by $b_k'$ at the same position in the $\mathcal{I}'$ instance. Otherwise, place a sensor $s'$ emitted by $b_k'$ at $(x(s),|y(s)|)$ (For a base station $b_k$ with $y(b_k)<0$, the sensor $s$ it emits must have $y(s)\leq0$).

In Lemma \ref{Lem1}, we know that if the sensor is emitted by the base station $b_k$, then there is no other base station within the circle with the center of the sensor as the center and $d_k(s)$ as the radius. For General MinMD$+$OCTC$_{line}$ problem, we will illustrate that if the sensor $s$ emitted by base station $b_k$ in the optimal solution, and cover target interval $[x(t_i),x(t_j)]$, then there is an area where there are no opening other base stations.   
	
If a sensor with radius $r$ needs to cover both $t_i$ and $t_j$, then the sensor center must be located within $C_{t_i,r} \cap C_{t_j,r}$, which can be simplified to be located within the red curve $\gamma_{ij}$, since the sensors are all above the straight line. 
Because the sensor covering $[x(t_i),x(t_j)]$ in the optimal solution is emitted by base station $b_k$, this shows that among all the opened base stations in the optimal solution, the sensor emitted by $b_k$ is the sensor with the shortest moving distance that can cover the interval $[x(t_i),x(t_j)]$. Then within the range of $d_k(s)$ from the curve $\gamma_{ij}$, there cannot be any other base station. Otherwise, this base station can emit sensor $s'$ coverage $[x(t_i),x(t_j)]$ to replace the role of sensor $s$, and obtain a solution with a smaller moving distance cost. This is contradictory to the optimal solution. 

Now, we will explicitly draw the curve at a distance $d_k(s)$ from the curve $\gamma_{ij}$.  The curve $\gamma_{ij}$ is divided into four parts, arc $EF$, point $F$, arc $FG$ and straight line $GE$. The distance $d_k(s)$ from arc $EF$ is a part of circle $C_{t_j,d_k(s)+r}$, that is, arc $AB$. Similarly, for arc $FG$, we can find the arc $CD$ of circle $C_{t_i,d_k(s)+r}$, and the distance $d_k(s)$ from point $F$ is circle $C_{F,d_k(s)}$. The straight line $AD$ is because we only consider the situation above the straight line. Finally, we get the black curve $\Gamma_{k,ij}$ in the figure \ref{fig_general_inside_non_base}.

 \begin{figure}[htbp]
	\centering
	\begin{tikzpicture}
		
		\coordinate (A) at (-4,0);
		\coordinate (B) at (8,0);
		\coordinate (C) at (3.5,1.6);
		\coordinate (D) at (5,1.5);
		\coordinate (E) at (1,0);
		\coordinate (F) at (2.4,0);
		\coordinate (G) at (1.7,0.7);
		\coordinate (H) at (1.91,0.45);
		\coordinate (I1) at (-0.55,0);
		\coordinate (I2) at (0.34,2.1);
		\coordinate (I3) at (3.1,2.1);
		\coordinate (I4) at (3.95,0);

		\draw (A) -- (B);
    	\draw (C) -- (H);
		
	   \filldraw[black] (E) circle (2pt);
	   \filldraw[black] (F) circle (2pt);
	   \filldraw[black] (H) circle (2pt);
	   
		\filldraw[black] (C) ++(-0.15,-0.15) rectangle ++(0.3,0.3); 
		
		\draw[dashed] (E) circle (1);
		\draw[dashed] (F) circle (1);
		\draw (H) circle (1);
		\draw[dashed] (E) circle (2.96);
		\draw[dashed] (F) circle (2.96);
		\draw[dashed] (G) circle (1.96);
		
		\node at (C) [above right] {$b_k$};
		\node at (H) [right] {$\text{s}$};
		\node at (E) [above] {$t_i$};
		\node at (F) [above right] {$t_j$};
		\node at (I1) [above left] {$A$};
		\node at (I2) [above] {$B$};
		\node at (I3) [above] {$C$};
		\node at (I4) [above right] {$D$};
		\node at (1.4,0) [below] {$E$};
		\node at (2,0) [below] {$G$};
		\node at (G) [above] {$F$};
		\node at (-3,-1) {$C_{t_i,d_k(s)+r}$};
		\node at (6.5,-1) {$C_{t_j,d_k(s)+r}$};
		\node at (1.8,-1.5) {$C_{F,d_k(s)}$};
		
		\draw[line width=2pt] (E) ++(0:2.96) arc (0:46:2.96);
		\draw[line width=2pt] (G) ++(45.5:1.96) arc (45.5:134.5:1.96);
	    \draw[line width=2pt] (F) ++(134.5:2.96) arc (135:180:2.96);
	    \draw[line width=2pt, red] (E) ++(0:1) arc (0:46:1);
	    \draw[line width=2pt, red] (F) ++(134.5:1) arc (134.5:180:1);
	    \draw[line width=2pt] (I1) -- (I4) ;
	     \draw[line width=2pt,red ] (1.4,0) -- (2,0) ;
	    
	     \filldraw[red] (G) circle (2pt);
	     \filldraw[red] (1.4,0) circle (2pt);
	     \filldraw[red] (2,0) circle (2pt);
	    \filldraw[black] (I1) circle (2pt);
	    \filldraw[black] (I2) circle (2pt);
	    \filldraw[black] (I3) circle (2pt);
	    \filldraw[black] (I4) circle (2pt);
	\end{tikzpicture}
	\caption{ }\label{fig_general_inside_non_base}
\end{figure}

In summary, we can get the following observation.
\begin{observation}\label{general_sensor_location}
	For an optimal solution, target interval $[x(t_i),x(t_j)]$ is covered by sensor $s$ that is emitted by base station $b_k$,  then there will be no other base stations opened inside the curve $\Gamma_{k,ij}$ in the figure \ref{fig_general_inside_non_base}.
\end{observation}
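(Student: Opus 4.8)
The statement consolidates the exchange reasoning sketched in the two paragraphs just above it, so the plan is to write that reasoning out cleanly and then identify the geometry. Fix the sensor $s$ emitted by $b_k$ in an optimal solution and let $t_i,\dots,t_j$ be exactly the block of targets that $s$ covers. Put $F_{ij}=C_{t_i,r}\cap C_{t_j,r}\cap\{y\ge 0\}$, the set of admissible centers of a radius-$r$ disk that covers both $t_i$ and $t_j$ and lies in the closed upper half-plane; $F_{ij}$ is convex (an intersection of two disks with a half-plane) and its boundary is the red curve $\gamma_{ij}$ of Figure~\ref{fig_general_inside_non_base}. The one geometric fact I will reuse is that any radius-$r$ disk whose center lies in $F_{ij}$ automatically covers the whole block $t_i,\dots,t_j$: its intersection with $L$ is an interval containing the endpoints $t_i,t_j$, hence all targets between them (the same reasoning as Observation~\ref{Observation_target_interval}).

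First I would pin down $s$: in an optimal solution the center of $s$ is the point of $F_{ij}$ nearest $b_k$, so that $d_k(s)=\mathrm{dist}(b_k,F_{ij})$. The center of $s$ lies in $F_{ij}$ because $s$ covers $t_i$ and $t_j$; were it not the nearest point, relocating $s$ there (the nearest point is unique by convexity, and equals $b_k$ itself when $b_k\in F_{ij}$) would keep $t_i,\dots,t_j$ covered, leave every other sensor untouched, and strictly shorten the movement of $s$, contradicting optimality. Next I would assume, for contradiction, that some other opened base station $b_{k'}\neq b_k$ lies strictly inside $\Gamma_{k,ij}$, which by construction means $\mathrm{dist}(b_{k'},F_{ij})<d_k(s)$, and have $b_{k'}$ emit a sensor $s'$ at the point of $F_{ij}$ nearest $b_{k'}$. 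By the geometric fact above $s'$ covers all of $t_i,\dots,t_j$, and its movement equals $\mathrm{dist}(b_{k'},F_{ij})<d_k(s)$; deleting $s$ and inserting $s'$ leaves every target covered, opens or closes no base station, and strictly lowers the total moving-distance, a contradiction. Hence no opened base station lies strictly inside $\Gamma_{k,ij}$.

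It then remains to check that ``strictly inside $\Gamma_{k,ij}$'' really names the set $\{p:\mathrm{dist}(p,F_{ij})<d_k(s)\}\cap\{y\ge 0\}$ — the offset computation already carried out in the text. The part of $\partial F_{ij}$ lying above $L$ is an arc of $C_{t_j,r}$ (the arc $EF$), the vertex $F$, and an arc of $C_{t_i,r}$ (the arc $FG$); pushing these outward by $d_k(s)$ produces an arc of $C_{t_j,\,r+d_k(s)}$ (the arc $AB$), an arc of the circle $C_{F,d_k(s)}$, and an arc of $C_{t_i,\,r+d_k(s)}$ (the arc $CD$), and the upper-half-plane restriction closes the boundary with the segment $AD$ on $L$; this is exactly $\Gamma_{k,ij}$. (The degenerate case $d_k(s)=0$, in which the region collapses to $F_{ij}$ and the second exchange no longer yields a strict decrease by itself, I would dispatch separately by a short argument invoking the opening costs — moving the coverage of $t_i,\dots,t_j$ off $b_k$ for free and then appealing to the usual redundant-base-station reduction.) Assembling these pieces gives the observation.

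I expect the only genuinely delicate point to be this last, purely geometric step — matching the four pieces of the offset curve to the labels $A,B,C,D,F$, checking the orientations of the arcs, and verifying that the half-plane cutoff is the segment $AD$ — plus the small bookkeeping for $d_k(s)=0$. The two exchange steps are routine; the one thing to watch there is that relocating or replacing $s$ never leaves a target uncovered, which is guaranteed precisely by the ``a disk centered in $F_{ij}$ covers the whole block $t_i,\dots,t_j$'' remark set up at the start.
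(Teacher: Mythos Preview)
Your proposal is correct and follows essentially the same route as the paper: the paper's argument (contained in the two paragraphs preceding the observation) is precisely the exchange step you describe---if another opened base station lay within distance $d_k(s)$ of the admissible-center region, it could emit a sensor covering $[x(t_i),x(t_j)]$ with strictly smaller movement, contradicting optimality---followed by the same piecewise description of the offset curve $\Gamma_{k,ij}$. Your write-up is more careful than the paper's (naming $F_{ij}$ explicitly, isolating the ``a disk centered in $F_{ij}$ covers the whole block'' fact, and flagging the $d_k(s)=0$ degeneracy), but the underlying idea is identical.
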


 \begin{lemma}\label{Lem2} 
	Assume that $[x(t_i),x(t_j)]$ (resp. $[x(t_{i'}),x(t_{j'})]$) is the leftmost (resp. rightmost) target interval covered by the sensor $s_u$ (resp. $s_v$) emitted by base station $b_k$. Then for any target $x(t) \in [x(t_i),x(t_{j'})]$ , it is covered by the sensor from base station $b_k$.
 \end{lemma}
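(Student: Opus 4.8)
The plan is to run the argument of Lemma~\ref{Lem1} with the single interior target $t$ playing the role that the sensor $s''$ played there, and with Observation~\ref{general_sensor_location} (together with the ``no base station inside $C_{\cdot,b_k}$'' property, which holds in the general setting exactly as in \eqref{eq0414-3}) supplying the forbidden regions. I would fix an optimal solution satisfying Observation~\ref{general_sensor_location}, and among all such pick one opening the fewest base stations; by the instance reduction preceding Observation~\ref{general_sensor_location} I may assume every base station and every sensor center lies in the closed half-plane above $L$. Suppose, for contradiction, that some target $t$ with $x(t_i)\le x(t)\le x(t_{j'})$ is covered by no sensor of $b_k$.

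First I would locate two ``straddling'' sensors of $b_k$. A radius-$r$ disk is convex, so each sensor covers a contiguous block of targets on $L$; since $t_i$ is covered by $b_k$ but $t$ is not, some $b_k$-sensor $s$ covers a block whose rightmost target $t_q$ is the largest $b_k$-covered coordinate still $<x(t)$, and symmetrically some $b_k$-sensor $s'$ covers a block whose leftmost target $t_{p'}$ is the smallest $b_k$-covered coordinate $>x(t)$. Hence no target with coordinate in $(x(t_q),x(t_{p'}))$ is covered by $b_k$; in particular $t$ is covered by a sensor $\sigma$ emitted from some $b_{k'}\neq b_k$, whose center lies above $L$.

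Next I would record two standard facts, which put three circles through $b_k$ into play. Writing $C_{\sigma,b_k}$ for the disk centered at the center of $\sigma$ with radius equal to the distance from that center to $b_k$: since $b_{k'}$ must be the opened base station nearest $\sigma$, one gets $d_{k'}(\sigma)\le d_k(\sigma)$, so $b_{k'}\in C_{\sigma,b_k}$; and since $s,s'$ are emitted by $b_k$, property \eqref{eq0414-3} gives $b_{k'}\notin \mathrm{int}\,C_{s,b_k}$ and $b_{k'}\notin \mathrm{int}\,C_{s',b_k}$. All three disks $C_{\sigma,b_k},C_{s,b_k},C_{s',b_k}$ have $b_k$ on their boundary. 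The crux is to show that the configuration forces $C_{\sigma,b_k}\setminus(\mathrm{int}\,C_{s,b_k}\cup\mathrm{int}\,C_{s',b_k})$ to reduce, within the closed upper half-plane, to the single point $b_k$ (its mirror image across $L$ lies below $L$ and is excluded). Then $b_{k'}=b_k$, contradicting $k\ne k'$: closing $b_{k'}$ and re-emitting each of its sensors from $b_k$ keeps the moving-distance cost unchanged and opens one base station fewer, contradicting the choice of solution. (Alternatively one may argue via $\Gamma$: show $b_{k'}$ lies in the interior of $\Gamma_{k,\cdot}$ for $s$ or for $s'$, contradicting Observation~\ref{general_sensor_location} directly.)

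The main obstacle is precisely this last geometric step. In Lemma~\ref{Lem1} the points $s,s',s''$ lay on $L$, so the three circles through $b_k$ were governed by a purely one-dimensional ``betweenness'' and the conclusion was immediate. Here the centers of $s,s',\sigma$ range over the half-plane, and to close the argument one must show that the center of the sensor $\sigma$ covering the in-between target $t$ can be taken to lie ``between'' the centers of $s$ and $s'$ in the sense that separates the two lunes $C_{\sigma,b_k}\setminus \mathrm{int}\,C_{s,b_k}$ and $C_{\sigma,b_k}\setminus \mathrm{int}\,C_{s',b_k}$; establishing this betweenness (and the resulting degeneracy) is the genuinely two-dimensional case analysis, and the polynomially-many canonical sensor positions of Observation~\ref{general_Ploy_sensor_position} are the natural place to choose convenient representatives of $\sigma$, $s$ and $s'$ for carrying it out.
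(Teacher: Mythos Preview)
Your overall plan---assume an intermediate target $t$ is served by some $b_{k'}\ne b_k$, then trap $b_{k'}$ in a region that Observation~\ref{general_sensor_location} (or \eqref{eq0414-3}) forbids---is exactly the paper's strategy. But the route you commit to is not the one that closes.

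Your primary argument works with the disks $C_{s,b_k},C_{s',b_k},C_{\sigma,b_k}$ centered at \emph{sensor positions}. As you yourself flag, the needed inclusion $C_{\sigma,b_k}\subseteq \mathrm{int}\,C_{s,b_k}\cup \mathrm{int}\,C_{s',b_k}\cup\{b_k\}$ (within the upper half-plane) does not follow from the one-dimensional ordering $x(t_q)<x(t)<x(t_{p'})$ alone: the centers $s,\sigma,s'$ live off the line and need not be ``between'' one another in any sense that forces the lunes to cover. This is a genuine gap, not a routine case check; the polynomial set of canonical positions from Observation~\ref{general_Ploy_sensor_position} does not by itself supply the missing monotonicity.

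The paper takes what you list as the ``alternative'': it works with the regions $\Gamma_{k,\cdot\cdot}$ of Observation~\ref{general_sensor_location}, which are indexed by \emph{target intervals on $L$}, not by sensor centers. It uses $s_u$ and $s_v$ themselves (no straddling refinement), lets $[x(t_{i''}),x(t_{j''})]$ be the interval on $L$ that the foreign sensor covers, and observes that the hypothetical $b_k$-sensor $s_w$ closest to $\gamma_{i''j''}$ gives $d_{k'}(\cdot)\le d_k(s_w)$, hence $b_{k'}\in\text{range}\,\Gamma_{k,i''j''}$. The geometric crux then becomes
\[
\text{range}\,\Gamma_{k,i''j''}\ \subseteq\ \text{range}\,\Gamma_{k,ij}\ \cup\ \text{range}\,\Gamma_{k,i'j'},
\]
which is a statement about three regions each built from arcs of circles centered at points of $L$ with the in-between interval sandwiched by the outer two. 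Because the $\Gamma$-regions are parameterized by intervals on $L$, the line ordering is built into their definition, and this containment is the tractable two-dimensional claim (the paper argues it from Figure~\ref{fig_general_consecutive_interval}). So: switch from sensor-centered disks to interval-indexed $\Gamma$-regions and work directly with $s_u,s_v$; that is where the betweenness you are looking for actually lives.
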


\begin{proof}

	\begin{figure}[htbp]
		\centering
		\begin{tikzpicture}
			\coordinate (A) at (-4,0);
			\coordinate (B) at (12,0);
			\coordinate (C) at (1.5,1.6);
			\coordinate (D) at (3,1.5);
			\coordinate (E) at (-1,0);
			\coordinate (F) at (0.4,0);
			\coordinate (G) at (-0.3,0.7);
			\coordinate (H) at (-0.15,0.55);
			\coordinate (I1) at (-2.55,0);
			\coordinate (I2) at (-1.7,2.1);
			\coordinate (I3) at (1.1,2.1);
			\coordinate (I4) at (1.95,0);

			\draw (A) -- (B);
			\draw (C) -- (H);
			
			\filldraw[black] (E) circle (2pt);
			\filldraw[black] (F) circle (2pt);
		
			\filldraw[black] (H) circle (2pt);
			
			\filldraw[black] (C) ++(-0.15,-0.15) rectangle ++(0.3,0.3);

			\draw (H) circle (1);

			\node at (1.4,1.7) [above] {$b_k$};
			\node at (H) [above] {$\text{s}_u$};
			\node at (E) [above] {$t_i$};
			\node at (F) [above] {$t_j$};

			\node at (I2) [above left] {$\Gamma_{k,ij}$};
			
			\draw[line width=2pt] (E) ++(0:2.95) arc (0:46:2.95);
			\draw[line width=2pt] (G) ++(45.5:1.96) arc (45.5:134.5:1.96);
			\draw[line width=2pt] (F) ++(134.5:2.95) arc (135:180:2.95);
			\draw[line width=2pt] (I1) -- (I4) ;
	
			\coordinate (E1) at (4.84,0);
			\coordinate (F1) at (6.8,0);
			\coordinate (G1) at (5.83,0.35);
			\coordinate (H1) at (5.8,0.2);
			\coordinate (I1') at (1.25,0);
			\coordinate (I2') at (1.87,2.5);
			\coordinate (I3') at (9.79,2.5);
			\coordinate (I4') at (10.4,0);
			
			\filldraw[black] (E1) circle (2pt);
			\filldraw[black] (F1) circle (2pt);

			\filldraw[black] (H1) circle (2pt);

			\draw (H1) circle (1);

			\draw (C) -- (H1);
			
			\node at (H1) [above] {$\text{s}_v$};
			\node at (E1) [above left] {$t_{i'}$};
			\node at (F1) [above right] {$t_{j'}$};
			
			\draw[line width=2pt] (E1) ++(0:5.54) arc (0:27:5.54);
			\draw[line width=2pt] (G1) ++(28:4.5) arc (28:152:4.5);
			\draw[line width=2pt] (F1) ++(153:5.525) arc (153:180:5.525);
			\draw[line width=2pt] (I1') -- (I4');

			\node at (9,3.5) [above right] {$\Gamma_{k,s_i'j'}$};

			\coordinate (E11) at (2.4,0);
			\coordinate (F11) at (3.6,0);
			\coordinate (G11) at (3,0.8);
			\coordinate (H11) at (2.83,0.6);
			\coordinate (I1'') at (1.25,0);
			\coordinate (I2'') at (1.87,2.5);
			\coordinate (I3'') at (9.79,2.5);
			\coordinate (I4'') at (10.4,0);
			
			\filldraw[black] (E11) circle (2pt);
			\filldraw[black] (F11) circle (2pt);

			\filldraw[black] (H11) circle (2pt);

			\draw (H11) circle (1);

			\draw (C) -- (H11);
			
			\node at (H11) [above] {$\text{s}_w$};
			\node at (E11) [above] {$t_{i''}$};
			\node at (F11) [above] {$t_{j''}$};
			\node at (4.5,2)  [above] {$\Gamma_{k,i''j''}$};
			
			\draw[line width=2pt] (E11) ++(0:2.664) arc (0:53:2.664);
			\draw[line width=2pt] (G11) ++(53:1.664) arc (53:127:1.664);
			\draw[line width=2pt] (F11) ++(127:2.664) arc (127:180:2.664);
			\draw[line width=2pt] (I1'') -- (I4'');
		
		\end{tikzpicture}
		\caption{ }\label{fig_general_consecutive_interval}
	\end{figure}
	
Since $s_u$ and $s_v$ are emitted by base station $b_k$, according to Observation \ref{general_sensor_location}, we have that there are no base stations within curves $\Gamma_{k,ij}$ and $\Gamma_{k,i'j'}$. 
 Assuming that in the optimal solution, there exist a target interval $[x(t_{i''}),x(t_{j''})] \subset [x(t_i),x(t_{j'})]$ which covered by another sensor $s$ form base station $b_k'$. In addition, if $[x(t_{i''}),x(t_{j''})]$ is covered by sensor emitted by $b_k$, we can find the sensor $s_w$ by finding the closest point form base station $b_k$ to curve $\gamma_{i''j''}$ and get the curve $\Gamma_{k,i''j''}$.  Then  $d_{k'}(s)\leq d_k(s_w)$. If we define the interior of the curve $\Gamma$ as ``range $\Gamma$'', then we have the base station $b_{k'}$ is located in the curve $\Gamma_{k,i''j''}$ or range $\Gamma_{k,i''j''}$.  However, it can be seen from the figure \ref{fig_general_consecutive_interval} that

 $$\text{range}~\Gamma_{k,i''j''} \subseteq \text{range}~\Gamma_{k,ij} \cup \text{range}~\Gamma_{k,i'j'}.$$
 
We already know that there is no any base station except $b_k$ in $\text{range}~\Gamma_{k,ij} \cup \text{range}~\Gamma_{k,i'j'}$. Therefore, base station $b_{k'}$ can only be located at the location of base station $b_{k}$. However, any sensor emitted by $b_{k'}$ can be emitted by base station $b_{k}$, in order to reduce costs, we will not open base station $b_{k'}$.
 So, any target $x(t) \in [x(t_i),x(t_{j'})]$ is covered by the sensor from base station $b_k$.
\end{proof}

Therefore, there is an optimal solution of General MinMD$+$OCTC$_{line}$ problem, the target interval that any opening the base station $b_k$ is responsible for is consecutive. The property of midpoint perpendicular is still established for any two base station. Then, the second point is finished. We have the conclusion that the dynamic programming can be used to solve General MinMD$+$OCTC$_{line}$ problem.  Due to the increase in the number of potential sensor locations, the current running time is $O(mn^5)$.

\subsection{Approximation algorithm for a special case of MinMD$+$OCTC}\label{2.3}
In MinMD$+$OCTC problem, the target is distributed in the plane, and the opened base station can emit the sensor to cover it, which goal is to minimize the cost of opened base stations and the moving distance of sensors. In this section,
for a special case of the MinMD$+$OCTC problem, that is, when the base station is  far from the target, we give an approximation of 8.928.

The MinMD$+$OCTC problem algorithm design we gave is based on the algorithm of the UFL problem. For an instance $\mathcal{I}=(\mathcal{T},\mathcal{B})$, We regard the target $t_i$ as the customer $C_i$ in the UFL problem and the base station $b_k$ as the facility $F_k$.  After the UFL algorithm, if customer $C_i$ is served by facility $F_k$ in the UFL problem, then we can get a feasible solution for MinMD$+$OCTC problem by  placing a sensor emitted by base station $b_k$ in $\mathcal{I}$ at target $t_i$. However, when the target distribution is dense, the quality of this solution will be very poor. So our idea is to use some token targets to replace the target of the dense area, and then use the algorithm of the UFL problem. 
The plane is divided into a hexagon grids with a side length of $r$, and the targets in the grid are rounded to the center point of this grid, the grid center is token target, we mark it as instance $\widetilde{\mathcal{I}}=(\widetilde{\mathcal{T}},\mathcal{B})$.  

Before the algorithm analysis, we show a conclusion that a circle on any plane can be completely covered by 6 grid circles, which was proved in \cite{gao2022efficient}. Here we use geometric properties to improve it to 5.

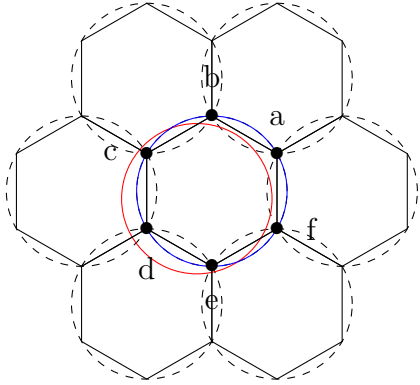
\begin{figure}[htbp]\label{hexagon}
	\centering
	\begin{tikzpicture}

		\def\radius{1}

		\pgfmathsetmacro{\sqrtthree}{sqrt(3)}

		\draw[dashed] (0,0) circle (\radius);

		\draw[dashed] (-\sqrtthree,0) circle (\radius);
		\draw[dashed] (\sqrtthree,0) circle (\radius);
		\draw[dashed] (\sqrtthree/2,3/2) circle (\radius);
		\draw[dashed] (-\sqrtthree/2,3/2) circle (\radius);
		\draw[dashed] (\sqrtthree/2,-3/2) circle (\radius);
		\draw[dashed] (-\sqrtthree/2,-3/2) circle (\radius);
		\draw[red] (-0.2,-0.1) circle (\radius);
		\draw[blue] (0,0)  circle (1);

		\node[label=above:a] at (30:\radius) {\textbullet};
		\node[label=above:b] at (90:\radius) {\textbullet};
		\node[label=left:c] at (150:\radius) {\textbullet};
		\node[label=below:d] at (210:\radius) {\textbullet};
		\node[label=below:e] at (270:\radius) {\textbullet};
		\node[label=right:f] at (330:\radius) {\textbullet};

		\draw (30:\radius) -- (90:\radius) -- (150:\radius) -- (210:\radius) -- (270:\radius) -- (330:\radius) -- cycle;

		\draw (-\sqrtthree,0) + (30:\radius) -- + (90:\radius) -- + (150:\radius) -- + (210:\radius) -- + (270:\radius) -- + (330:\radius) -- cycle;
		\draw (\sqrtthree,0) + (30:\radius) -- + (90:\radius) -- + (150:\radius) -- + (210:\radius) -- + (270:\radius) -- + (330:\radius) -- cycle;
		\draw (\sqrtthree/2,3/2) + (30:\radius) -- + (90:\radius) -- + (150:\radius) -- + (210:\radius) -- + (270:\radius) -- + (330:\radius) -- cycle;
		\draw (-\sqrtthree/2,3/2) + (30:\radius) -- + (90:\radius) -- + (150:\radius) -- + (210:\radius) -- + (270:\radius) -- + (330:\radius) -- cycle;
		\draw (\sqrtthree/2,-3/2) + (30:\radius) -- + (90:\radius) -- + (150:\radius) -- + (210:\radius) -- + (270:\radius) -- + (330:\radius) -- cycle;
		\draw (-\sqrtthree/2,-3/2) + (30:\radius) -- + (90:\radius) -- + (150:\radius) -- + (210:\radius) -- + (270:\radius) -- + (330:\radius) -- cycle;
		
	\end{tikzpicture}
	
	\caption{An illustration for the Lemma \ref{Lem_5_grid}. }\label{fig2}
\end{figure}

\begin{lemma}\label{Lem_5_grid}
	For any circle with radius $r$ on the plane, it is completely covered by at most 5 grid circles with radius $r$.
\end{lemma}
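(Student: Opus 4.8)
The plan is to make the statement precise in the way indicated by Figure~\ref{fig2}: the \emph{grid circles} are the closed disks of radius $r$ centered at the hexagon centers (the seven dashed circles in the figure; each hexagonal cell is inscribed in its own grid circle, since a regular hexagon of side $r$ has circumradius $r$), ``covered'' means set containment, and the ``circle with radius $r$'' to be covered is an arbitrary closed disk $B$ of radius $r$ with center $c$ (the red circle). The hexagon centers form the triangular lattice $\Lambda$ generated by $(\sqrt3\,r,0)$ and $(\tfrac{\sqrt3}{2}r,\tfrac32 r)$; let $O$ be the center at the origin, $H_O$ its cell, and list its six neighbors as $(\pm\sqrt3\,r,0)$ and $(\pm\tfrac{\sqrt3}{2}r,\pm\tfrac32 r)$. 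The first step is a symmetry reduction. The symmetry group of the tiling (translations by $\Lambda$ together with the dihedral group $D_6$ about $O$) maps $\Lambda$ onto itself, hence carries any covering of $B$ by grid circles to a covering of the image disk by grid circles. It therefore suffices to treat $c$ lying in one fundamental triangle $\Delta=OMV$, where $V=(\tfrac{\sqrt3}{2}r,\tfrac12 r)$ is a vertex of $H_O$ and $M=(\tfrac{\sqrt3}{4}r,\tfrac34 r)$ is the midpoint of an incident edge; this is a $30$--$60$--$90$ triangle forming one twelfth of $H_O$.

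Second, for $c\in\Delta$ (which lies in the upper-right portion of $H_O$) I would commit to the five centers $O$, $(\sqrt3\,r,0)$, $(\tfrac{\sqrt3}{2}r,\tfrac32 r)$, $(-\tfrac{\sqrt3}{2}r,\tfrac32 r)$ and $(\tfrac{\sqrt3}{2}r,-\tfrac32 r)$, that is, $O$ together with the four neighbors on the side of $B$, discarding the two far lower-left neighbors $(-\sqrt3\,r,0)$ and $(-\tfrac{\sqrt3}{2}r,-\tfrac32 r)$. The gain over the previously known bound of six comes precisely from the fact that a grid circle reaches past the edges of its own hexagon (an edge midpoint lies at distance $\tfrac{\sqrt3}{2}r<r$ from the center) into the neighboring cells, so once $c$ is confined to $\Delta$ the two discarded disks are never needed.

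Third, to verify coverage I would reduce to the boundary circle $\partial B$. Writing $g(x)=\min_i|x-p_i|$ for the distance from $x$ to the nearest chosen center $p_i$, coverage of $B$ is the assertion $g\le r$ on $B$, and the maximum of $g$ over the convex body $B$ is attained either on $\partial B$ or at a Voronoi vertex of the five sites lying in the interior of $B$. The only such interior Voronoi vertices are circumcenters of three mutually adjacent lattice centers, i.e. ordinary hexagon vertices such as $V$ and its images, where $g$ equals the circumradius $r$ exactly; hence every interior candidate already satisfies $g\le r$, and $c$ itself lies in $H_O\subseteq D_O$ and so is covered by the grid circle $D_O$ at $O$. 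It therefore remains to show that every point of $\partial B$ lies within distance $r$ of one of the five centers.

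The boundary check is where the real work lies and is the step I expect to be the main obstacle. One must establish $g\le r$ along $\partial B$ \emph{uniformly} over the two-parameter family $c\in\Delta$; as $c$ varies, different arcs of $\partial B$ change which of the five disks covers them (for instance $(-\tfrac{\sqrt3}{2}r,\tfrac32 r)$ is redundant when $c=V$ but genuinely needed when $c=M$). I would handle this by partitioning $\partial B$ into arcs according to the nearest chosen center, expressing the coverage condition on each arc as an explicit inequality in the coordinates of $c$, and then arguing by monotonicity/convexity that each inequality is tightest at a corner of $\Delta$, so that it suffices to check the three extreme positions $c\in\{O,M,V\}$. The position $c=V$ is exactly critical: the lowest point $(\tfrac{\sqrt3}{2}r,-\tfrac12 r)$ of $B$ is equidistant, at distance precisely $r$, from $O$, $(\sqrt3\,r,0)$ and $(\tfrac{\sqrt3}{2}r,-\tfrac32 r)$, which confirms both that the five disks are all used and that this choice cannot cover with fewer. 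Carrying out the monotonicity argument that collapses the whole family to these corner cases, cleanly rather than by exhaustive case analysis, is the crux of the proof.
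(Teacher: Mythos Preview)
Your plan is sound and would work, but it diverges sharply from the paper's argument, and the part you flag as ``the crux'' is exactly what the paper avoids. The paper does not fix five disks in advance and verify coverage; instead it runs a short counting argument on the six vertices $a,\ldots,f$ of the central hexagon $H_O$. The observation is that the vertices of $H_O$ contained in $B$ form a consecutive arc (since $B\cap\partial D_O$ is an arc), and four consecutive vertices already include an antipodal pair at distance $2r$, which forces $c=O$ and $B=D_O$ (one disk). Otherwise $B$ contains at most three consecutive vertices. The paper's ``key observation'' is that a neighbouring grid disk $D_N$ (sharing edge $e$ with $H_O$) is needed only if $B$ contains an endpoint of $e$; with at most three consecutive vertices in $B$, at most four of the six edges have an endpoint in $B$, hence at most four neighbours plus the central disk, i.e.\ five in total.

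So the trade-off is this: the paper's route is a two-line combinatorial reduction once the key observation is granted, but that observation is asserted rather than proved, and its justification (that any intrusion of $B$ into $H_N$ which misses both endpoints of $e$ is already absorbed by $D_O$) is essentially the same geometric fact you would be checking on $\partial B$. Your route makes this content explicit and constructive, at the price of the two-parameter boundary verification over $\Delta$. Your proposed reduction of that verification to the corners $\{O,M,V\}$ by ``monotonicity/convexity'' is the genuinely unfinished piece: $c\mapsto\max_{x\in\partial B(c)}g(x)$ is a min--max of distances and need not attain its maximum at a vertex of $\Delta$, so this step would require a real argument (or a direct arc-by-arc check). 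If you want to shorten your write-up, the paper's vertex-counting idea gives a cleaner way to organise the case analysis than the fundamental-domain reduction.
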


\begin{proof}
 For any circle on the plane, it is related to at most 7 grid circles, as shown in Figure \ref{fig2}. A key observation is that if a circle needs to be covered with a grid circle if and only if the circle covers at least one corner point of the grid circle. If a circle contains a bunch of corner points of a hexagonal grid (such as point $a$ and $d$), then a grid circle can completely cover this circle, such as the blue circle in Figure \ref{fig2}. Otherwise it contains at most three consecutive corner points, this means that at most 5 of the 7 grid circles are used. 
\end{proof}

\begin{theorem}\label{opt}
	There is an $8.928$ approximation algorithm for MinMD$+$OCTC problem.
\end{theorem}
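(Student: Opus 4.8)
The plan is to reduce MinMD$+$OCTC on the hexagon-discretized instance $\widetilde{\mathcal{I}}=(\widetilde{\mathcal{T}},\mathcal{B})$ to metric Uncapacitated Facility Location, invoke the $1.488$-approximation of Li \cite{li_1488_2013} on that instance, and control the two places where quality is lost: the rounding of targets to grid centers and the passage back to the original instance $\mathcal{I}$.

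First I would set up the UFL instance on $\widetilde{\mathcal{I}}$: the facilities are the base stations with their opening costs $c_k$, the customers are the token targets (the center of each hexagon containing at least one original target), and the connection cost of $b_k$ to a token $\tilde t_j$ is the Euclidean distance $\|b_k-\tilde t_j\|$. This is exactly the moving distance needed to bring a sensor from $b_k$ all the way to $\tilde t_j$, and since a hexagon of side $r$ has circumradius $r$, a sensor sitting at $\tilde t_j$ covers the whole hexagon, hence every original target it contains; so a UFL solution translates back into a feasible MinMD$+$OCTC solution for $\mathcal{I}$ of exactly the same cost (open the chosen base stations, and for every token served by $b_k$ emit one sensor from $b_k$ placed at that token). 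Because the connection costs are the restriction of the planar Euclidean metric to $\mathcal{B}\cup\widetilde{\mathcal{T}}$, this is a genuine metric UFL instance, so running the algorithm of \cite{li_1488_2013} gives a solution $APX(\mathcal{I})$ with $APX(\mathcal{I})\le 1.488\cdot opt_{\mathrm{UFL}}(\widetilde{\mathcal{I}})$, where $opt_{\mathrm{UFL}}(\widetilde{\mathcal{I}})$ is the optimal cost of covering every token by a dedicated sensor placed at it.

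Second I would bound $opt_{\mathrm{UFL}}(\widetilde{\mathcal{I}})\le 6(1+o(1))\,opt(\mathcal{I})$. Start from an optimal solution $OPT(\mathcal{I})$ and keep its set of opened base stations (same opening cost). For each token $\tilde t_j$, pick an original target $t$ in its hexagon; in $OPT(\mathcal{I})$ the target $t$ is covered by some sensor $s$ emitted from some opened base station $b_k$ with $\|s-t\|\le r$, so assign $\tilde t_j$ to $b_k$ at cost $\|b_k-\tilde t_j\|\le \|b_k-s\|+\|s-t\|+\|t-\tilde t_j\|\le d_{OPT}(s)+2r$, where $d_{OPT}(s)$ is the moving distance of $s$. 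Grouping the tokens by the $OPT$-sensor charged, each $s$ receives tokens of hexagons containing a target covered by $s$; as all such targets lie in the radius-$r$ disk around $s$, the geometric fact proved above (Lemma \ref{Lem_5_grid}, together with \cite{gao2022efficient}) bounds the number of these hexagons by $6$ (and by $5$ for a square coverage region). Hence the total connection cost is at most $\sum_{s\in OPT}6\,(d_{OPT}(s)+2r)$. This is exactly where the hypothesis ``the base stations are far from the targets'' is used: each sensor $s$ covers some target at distance $\ge D\gg r$ from its base station, so $d_{OPT}(s)\ge D-r$ and $d_{OPT}(s)+2r\le(1+\tfrac{2r}{D-r})\,d_{OPT}(s)$, which absorbs the additive $2r$ terms into a $(1+o(1))$ factor. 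Combining, $APX(\mathcal{I})\le 1.488\cdot 6(1+o(1))\,opt(\mathcal{I})=8.928(1+o(1))\,opt(\mathcal{I})$, and the square-coverage count $5$ gives $7.44$; the whole procedure (build the grid, form $\widetilde{\mathcal{I}}$ with at most $n$ tokens and $m$ facilities, run Li's polynomial algorithm, translate back) is polynomial.

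The step I expect to be the main obstacle is the geometric counting: not only that a radius-$r$ disk meets at most six grid hexagons, but, more delicately, that this correctly caps the number of distinct tokens charged to a single $OPT$-sensor (handling targets lying on shared cell boundaries, where the assignment of a target to a hexagon can be chosen to consolidate tokens). The secondary subtlety is making ``far'' precise: an additive $\Theta(r)$ rounding error is unavoidable, and one must state the separation hypothesis strongly enough that it becomes a multiplicative $(1+o(1))$ and the clean constant $8.928$ is attained in the limit.
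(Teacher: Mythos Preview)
Your overall reduction (hexagon grid $\to$ metric UFL on tokens $\to$ Li's $1.488$ algorithm $\to$ translate back by placing a sensor at each token) is exactly the paper's scheme, and your observation that the translated UFL solution is feasible for $\mathcal{I}$ because a side-$r$ hexagon has circumradius $r$ is correct.

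The gap is in how you obtain the factor $6$. Your charging assigns each token $\tilde t_j$ to an $OPT$-sensor $s$ that covers some target in its hexagon, and then you bound the number of tokens charged to $s$ by ``the number of hexagons meeting the radius-$r$ disk of $s$'', which you claim is $6$ by Lemma~\ref{Lem_5_grid}. But Lemma~\ref{Lem_5_grid} says something different: it says that five grid \emph{circles} (circumscribed circles of hexagons) suffice to \emph{cover} any radius-$r$ disk; it does not bound how many hexagons the disk intersects. In fact a radius-$r$ disk can meet seven side-$r$ hexagons (center the disk at a hexagon center: the circumscribed circle bulges outside every edge, so it enters all six neighbors). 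With your charging the honest per-sensor bound is $7\bigl(d_{OPT}(s)+2r\bigr)$, which yields $7\cdot 1.488\,(1+o(1))\approx 10.416$, not $8.928$.

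The paper gets $6$ by a different mechanism: it applies Lemma~\ref{Lem_5_grid} literally, replacing each $OPT$-sensor $s$ by the \emph{five} grid sensors whose circles cover the disk of $s$, so the transformed moving cost is at most $5\,d_{OPT}(s)+10r$; then the ``far'' hypothesis is used \emph{concretely} as $\min_{i,k}\|t_i b_k\|>10r$ to conclude $5\,d_{OPT}(s)+10r<6\,d_{OPT}(s)$. Thus the $6$ is ``$5$ from the covering lemma plus one unit to absorb the additive $10r$'', giving exactly $1.488\cdot 6=8.928$ rather than your asymptotic $8.928\,(1+o(1))$. To repair your argument you should switch from the hexagon-intersection count to the five-circle covering of Lemma~\ref{Lem_5_grid}, and replace the $(1+o(1))$ handling of the $+2r$ terms by the explicit threshold assumption, as the theorem is stated for that special case and not as a limiting statement.
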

\begin{proof}

	We record the optimal and feasible solutions of instance $\mathcal{I}$ as $OPT$ and $APX$, and for instance $\widetilde{\mathcal{I}}$ as $OPT_g$ and $APX_g$, respectively.  We can get a feasible solution $APX_g(\widetilde{\mathcal{I}})$ by applying the UFL algorithm (placing a sensor at each token target), and record the sensor as {\em grid sensor}. $APX_g$ also covers all target, take it as feasible solution for instance $\mathcal{I}$. In order to analyze the gap between $APX_g$ and $OPT$, we transform $OPT$ into a feasible solution $APX_g^{trans}$ of $\widetilde{\mathcal{I}}$. First, base station $b_k$ is opened in $APX_g^{trans}$ if and only if $b_k$ is opened in $OPT$. Secondly, according to Lemma \ref{Lem_5_grid}, a sensor $s$ emitted by base station $b_k$ in $OPT$ will be replaced with 5 grid sensors $s_1,...,s_5$, let $s_1,...,s_5$ are emitted by $b_k$. Assuming that the base station is far away from the target, that is, $\min \|t_i b_k\|>10r$, then we have the following conclusions. 
		$$\sum\limits_{i=1}^{5}d_{APX_g^{trans}}(s_{i})<5d_{OPT}(s)+10r <6d_{OPT}(s).$$
	Therefore, 
	\begin{align}
		apx_g^{trans} &= \sum\limits_{k \in APX_g^{trans}}c_k+ \sum\limits_{i \in APX_g^{trans}}d(s_i) \nonumber \\
		 &= \sum\limits_{k \in OPT}c_k+ \sum\limits_{i \in APX_g^{trans}}d(s_i) \nonumber \\
		 &\leq  \sum\limits_{k \in OPT}c_k+ 6\sum\limits_{i \in OPT}d(s_i) \nonumber \\
		\nonumber
		&\leq  6opt \\
		\nonumber
	\end{align}
	
Since $APX_g^{trans}$ is a feasible solution of instance $\widetilde{\mathcal{I}}$, and 1.488 approximation algorithm for instance $\widetilde{\mathcal{I}}$, one has
$$apx=apx_g\leq 1.488 opt_g \leq 1.488 apx_g^{trans}.$$

To sum up, we get $apx \leq 8.928 opt$.
	
\end{proof}

\section{Conclusion}
In this paper, we consider the Minimum Moving-Distance and Opening-Costs Target Coverage problem (MinMD$+$OCTC) problem. For the case where the target is on a straight line, we use dynamic programming to give the optimal solution. In particular, we also provide a polynomial time algorithm for the case where the sensor center does not need to move to a straight line. When the target is on the plane, and give a simple algorithm with an approximate ratio of 8.928. The key to designing algorithms for the MinMD$+$OCTC problem in this paper is to use the geometric properties of the Euclidean space. The result of the article is based on the smaller radius of the sensor relative to the moving distance. So one challenge is how to remove this assumption. In addition, there is a problem is how to design algorithms for the MinMD$+$OCTC problem when the radius of the sensor is different. However, we find that when the radius is different, Lemma \ref{Lem1} is not satisfied. This is also a difficult point and worth solving.

\bibliographystyle{plain}

\bibliography{Ref1}

\end{document}